\numberwithin{algorithm}{section}
\newcommand{\bbr}{\mathbb{R}}
\newcommand{\bbc}{\mathbb{C}}
\newcommand{\bbh}{\mathbb{H}}
\newcommand{\bx}{\boldsymbol{x}}
\newcommand{\btheta}{\boldsymbol{\theta}}
\newcommand{\bX}{\boldsymbol{X}}
\newcommand{\bu}{\boldsymbol{u}}
\newcommand{\bU}{\boldsymbol{U}}
\newcommand{\calT}{{\cal T}}
\newcommand{\calt}{{t}}
\newcommand{\calc}{{\cal C}}
\theoremstyle{break} 
\newtheorem{definition}{Definition} 
\newtheorem{theorem}[definition]{Theorem}
\newtheorem{example}{Example}
\theoremstyle{nonumberbreak} 
\newtheorem{proof}{Proof}
\renewcommand*\env@matrix[1][*\c@MaxMatrixCols c]{%
  \hskip -\arraycolsep
  \let\@ifnextchar\new@ifnextchar
  \array{#1}}
\journal{}
\begin{document}

\begin{frontmatter}

\title{Efficient goodness-of-fit tests in multi-dimensional vine copula models}

	\author{Ulf Schepsmeier }

\address{Center for Mathematical
Sciences, Technische Universit\"at M\"unchen, Boltzmannstraße 3,
85748 Garching b. M\"unchen, Germany. Email: \texttt{schepsme@ma.tum.de}, Phone: +49.89.289.17422}

\begin{abstract}
We introduce a new goodness-of-fit test for regular vine (R-vine) copula models, a flexible class of multivariate copulas based on a pair-copula construction (PCC). The test arises from the information matrix ratio.
The corresponding test statistic is derived and its asymptotic normality is proven. The test's power is investigated and compared to 14 other goodness-of-fit tests, adapted from the bivariate copula case, in a high dimensional setting. The extensive simulation study shows the excellent performance with respect to size and power as well as the superiority of the information matrix ratio based test against most other goodness-of-fit tests. The best performing tests are applied to a portfolio of stock indices and their related volatility indices validating different R-vine specifications.
\end{abstract}

\begin{keyword}
copula \sep goodness-of-fit tests \sep information matrix ratio test \sep power comparison \sep R-vine 
\end{keyword}

\end{frontmatter}

\section{Introduction}
\label{sec:introduction}

Analyzing complex correlated data has received considerable attention
in the current statistical literature. 
Among many approaches to modeling correlation structures, copula based models offer a powerful and flexible toolbox to characterize dependence profiles among variables, which have been studied extensively. 
However, it is unfortunate that there is little progress known in the theory and method concerning a goodness-of-fit (GOF) test, an important aspect of statistical model diagnostics. In fact, most of the published work has been only focused on bivariate copula models \citep[see for example][]{Genest2009}.

Copulas join marginal distributions $F_1,\ldots,F_d$ of a (continuous) random vector $\bX=(X_1,\ldots,X_d)$ with their dependency structure by a joint cumulative distribution function (cdf) $H(x_1,\ldots,x_d)=C(F_1(x_1),\ldots,F_d(x_d))$. Here $C$ is the unique cdf with uniform margins on the unit hypercube \citep{Sklar}. Classical copula classes such as the elliptical or Archimedean copulas are very limited with respect to flexibility in higher dimensions. But they are very powerful and well understood in the bivariate case. Thus \citet{Joe2} and later \citet{Bedford_Cooke2001,Bedford_Cooke} independently constructed multivariate densities using $d(d-1)/2$ bivariate copulas. They permit flexibility and feasibility of constructing and computing a relatively large dimensional copula model.
In \citet{Aas_Czado} this process is termed a pair-copula construction (PCC) and the statistical inference is developed for it. Since then the theory of vine copulas arising from the PCC were studied in literature  \citep[see for example][]{Czado,StoeberSchepsmeier2012,CzadoSchepsmeierMin2011,DissmannBrechmannCzadoKurowicka2011}.

Along with the break through of vine copula constructions model diagnosis becomes ever so imperative in the application of multi-dimensional vine copulas.
Developing efficient GOF tests is now a timely task as already noted in \citet{Fermanian2012}, and an
important addition to the current literature of vine copulas. In
addition, comprehensive comparisons for many of the classical
GOF tests are lacking in terms of their relative merits when they are
applied to multi-dimensional copulas.
So far model verification methods for vine copulas are usually based on the likelihood, or on the Akaike Information Criterion (AIC) or Bayesian Information Criterion (BIC) as classical comparison measures, which take the model complexity into account. 

In our goodness-of-fit (GOF) tests we would like to test
\begin{equation}
	H_0: C\in \calc_0=\{C_{\btheta}:\btheta\in\Theta\}\quad \text{against}\quad H_1: C\notin\calc_0=\{C_{\btheta}:\btheta\in\Theta\},
	\label{eq:test2}
\end{equation}
where $C$ denotes the (vine) copula distribution function and $\calc_0$ is a class of parametric (vine) copulas
with $\Theta\subseteq\bbr^p$ being the parameter space of dimension $p$.

For the elliptical and parametric Archimedean copulas many GOF tests were studied in the literature \citep{Genest_Remillard_2,Genest2009,Berg,HuangProkhorov2011}.
However, a GOF test for vine copula models verifying the chosen pair-copula families has, to our knowledge, only be treated in \citet{Schepsmeier2013}. 
Although, already \citet{Aas_Czado} suggested a GOF test for vine copulas based on the multivariate probability integral transformation (PIT) of \citet{Rosenblatt1952} given in the appendix, but never investigated its small sample performance. We will show that this test and many other copula GOF tests have little to no power in the high dimensional setting of a vine and thus are not appropriate to be utilized there.

The main contribution of this paper is a new GOF test to perform model verification of vine copula models using hypothesis tests. As in \citet{Schepsmeier2013} it is based on the Bartlett identity ($-\bbh(\btheta)=\bbc(\btheta)$) as generally suggested by \citet{White1982}. Here $\bbh(\btheta)$ is the expected Hessian or variability matrix, and $\bbc(\btheta)$ is the expected outer product of the gradient or sensitivity matrix.
In contrast to the White test, which relies on the difference between $-\bbh(\btheta)$ and $\bbc(\btheta)$, our new test is based on the information matrix ratio (IMR), $\Psi(\btheta) = -\bbc(\btheta)^{-1}\bbh(\btheta)$ \citep{ZhouSongThompson2012}. 

First, the IMR based test statistic for vine models will be derived and its asymptotic normality under the Bartlett identity will be proven. Secondly, the small sample performance for size and power will be investigated and compared to 14 other GOF tests for vines in a high dimensional setting ($d=5$ and $d=8$). In particular, we will compare to GOF tests based on the
\begin{itemize}
	\item difference of Bartlett identity or
	\item empirical copula process $\hat{C}_n(\bu)-C_{\hat{\btheta}_n}(\bu)$, with $\bu=(u_1,\ldots,u_d)\in[0,1]^d$,
	\begin{equation}
	\hat{C}_n(\bu) = \frac{1}{n+1}\sum_{t=1}^n \boldsymbol{1}_{\{U_{t1}\leq u_1,\ldots,U_{td}\leq u_d \}},
	\label{eq:empCopula}
\end{equation}
	and $C_{\hat{\btheta}_n}(\bu)$ being the copula with estimated parameter(s) $\hat{\btheta}_n$, and/or
	\item multivariate PIT.
\end{itemize}
For the tests based on the multivariate PIT aggregation to univariate test data is facilitated using different aggregation functions. For the univariate test data then standard univariate GOF test statistics such as Anderson-Darling (AD), Cramér-von Mises (CvM) and Kolmogorov-Smirnov (KS) are used.
In contrast, the empirical copula process (ECP) based test use the multivariate Cramér-von Mises (mCvM) and multivariate Kolmogorov-Smirnov (mKS) test statistics. The different GOF tests are given in the appendix for the convenience of the reader.

The power study will expose that the information based GOF tests such as the information matrix difference approach of \citet{Schepsmeier2013} and in particular our new IMR based test outperform the other GOF tests in terms of size and power. The PIT based GOF tests reveal little to no power against the considered alternatives. But applying the PIT transformed data to the empirical copula process, as first suggested by \citet{Genest2009}, is more promising. Here $C_{\hat{\btheta}_n}(\bu)$ is replaced by the independence copula $C_{\bot}$ in the ECP.

The remainder of this paper is structured as follows: Section \ref{sec:rvine} gives an introduction on vine copula models. The new proposed IR test is introduced and its test statistics derived in Section \ref{sec:IR}. Additionally the asymptotic normality of the test statistic is proven. Further GOF tests extended from known copula GOF tests are given in Section \ref{sec:gof} for the extensive power comparison study in Section \ref{sec:powerstudy} investigating their size and power. An application of an 8-dimensional portfolio of stock indices and their related volatility indices is performed in Section \ref{sec:application} comparing different vine specifications and proposed GOF tests. The final Section \ref{sec:discussion} summarizes and shows areas of further research.

\section{Regular vine copula model}
\label{sec:rvine}

Pair-copula constructions (PCC) are a very flexible way to model multivariate distributions with copulas. The model is based on the decomposition of the $d$-dimensional density into $d(d-1)/2$ (conditional) bivariate copula densities.
\citet{Bedford_Cooke2001,Bedford_Cooke} introduced linked trees $T_i=(V_i, E_i)$, where $V_i$ denotes the set of nodes while $E_i$ represents the set of edges, which helps to organize the vine construction. The following conditions have to be fulfilled to call a sequence of trees $\mathcal{V}=(T_1,\ldots,T_{d-1})$ a vine:
\begin{enumerate}
	\item $T_1$ is a tree with nodes $V_1=\{1,\ldots,d\}$ and edges $E_1$.
	\item For $i\geq 2$, $T_i$ is a tree with nodes $V_i = E_{i-1}$ and edges $E_i$.
	\item If two nodes in tree $T_{i+1}$ are joined by an edge, the corresponding edges in $T_i$ must share a common node {\it (proximity condition)}.
\end{enumerate}
An example of a vine tree sequence $\mathcal{V}$ is given in Figure \ref{fig:5dimRvine}. Here $V_1=\{1,\ldots,5\}$ and $E_1=\{\{1,2\},\{1,3\},\{1,4\},\{4,5\}\}$ forming the unconditional pair-copulas. The conditional pair-copulas are the edges of tree 2-4. The complete construction of the joint density is given in Example \ref{example}.

Following the notation of \citet{Czado} we define a set of bivariate copula densities  $\mathcal{B} = \left\{c_{j(e),k(e);D(e)} \vert e \in E_i, 1 \leq i \leq d-1 \right\}$ corresponding to edges $j(e),k(e)|D(e)$ in $E_i$, for $1\leq i\leq d-1$. Here $\boldsymbol{u}_{D(e)}$ denotes the subvector of $\bu=(u_1,\ldots,u_d)\in[0,1]^d$ determined by the set of indices in $D(e)$. The set $D(e)$ is called the {\it conditioning set} while the indices $j(e)$ and $k(e)$ form the {\it conditioned set}. Then a $d$-dimensional regular vine copula density can be constructed as
\begin{equation}
	c_{1,\ldots,d}(\bu)
	=\prod_{i=1}^{d-1}\prod_{e\in E_i}c_{j(e),k(e);D(e)}(C_{j(e)|D(e)}(u_{j(e)}|\bu_{D(e)}),C_{k(e)|D(e)}(u_{k(e)}|\bu_{D(e)})).
	\label{eq:density}
\end{equation}
For the copula arguments, the conditional cdfs
$C_{j(e)|D(e)}(u_{j(e)}|\bu_{D(e)})$ and \\
$C_{k(e)|D(e)}(u_{k(e)}|\bu_{D(e)})$, \citet{Joe2} developed a formula derived from the first derivative of the corresponding cdf with respect to the second copula argument, i.e.
{\small
\begin{gather}
\begin{split}
	C_{j(e)|D(e)}(u_{j(e)}|\bu_{D(e)}) = \frac{\partial C_{j(e),j^{\prime}(e);D(e)\setminus j^{\prime}(e)}(C(u_{j(e)}|\bu_{D(e) \setminus j^{\prime}(e)}),C(u_{j^{\prime}(e)}|\bu_{D(e)\setminus j^{\prime}(e)}))}{\partial C(u_{j^{\prime}(e)}|\bu_{D(e)\setminus j^{\prime}(e)})}.
\end{split}
\label{eq:hfunction}
\end{gather}}

Here $j^{\prime}(e)\in D(e)$ is an index chosen from the conditioning set, such that $C_{j(e),j^{\prime}(e)|D(e)\setminus j^{\prime}(e)}$ is in $\mathcal{B}$.
In the literature Equation (\ref{eq:hfunction}) is often called a {\it h-function}. It is a recursive function which simplifies the calculation of the density or log-likelihood considerably. See for example \citet{DissmannBrechmannCzadoKurowicka2011} for a algorithmic presentation of the log-likelihood of an R-vine. Denoting the pair-copula parameters of $\mathcal{B}$ with $\btheta=\btheta(\mathcal{B}(\mathcal{V}))$ a vine copula model with density given in (\ref{eq:density}) is abbreviated as $RV=(\mathcal{V},\mathcal{B}(\mathcal{V}),\btheta(\mathcal{B}(\mathcal{V})))$.
We assume that the copula $C_{j(e),j'(e);D(e)\setminus j'(e)}$ does not depend on the values $\bu_{D(e)\setminus j^{\prime}(e)}$, i.e.~on the conditioning set without the chosen variable $u_{j^{\prime}(e)}$. This is called the {\it simplifying assumption}.


\begin{example}[5-dim pair-copula construction]
\label{example}
The corresponding copula density to the vine tree sequence given in Figure \ref{fig:5dimRvine} can be expressed as
\begin{gather}
\begin{split}
	&c_{12345}(u_1,\ldots,u_5) =  c_{1,2}(u_1,u_2)\cdot c_{1,3}(u_1,u_3)\cdot c_{1,4}(u_1,u_4)\cdot c_{4,5}(u_4,u_5) \\
	&\cdot c_{2,4;1}(C_{2|1}(u_2|u_1),C_{4|1}(u_4|u_1))\cdot c_{3,4;1}(C_{3|1}(u_3|u_1),C_{4|1}(u_4|u_1)) \\
	&\cdot c_{1,5;4}(C_{1|4}(u_1|u_4),C_{5|4}(u_5|u_4)) \\
	&\cdot c_{2,3;1,4}(C_{2|1,4}(C_{2|1}(u_2|u_1),C_{4|1}(u_4|u_1)),C_{3|1,4}(C_{3|1}(u_3|u_1),C_{4|1}(u_4|u_1))) \\
	&\cdot c_{3,5;1,4}(C_{3|1,4}(C_{3|1}(u_3|u_1),C_{4|1}(u_4|u_1)),C_{5|1,4}(C_{1|4}(u_1|u_4),C_{5|4}(u_5|u_4))) \\
	&\cdot c_{2,5;1,3,4}(C_{2|1,3,4}(C_{2|1,4}(C_{2|1}(u_2|u_1),C_{4|1}(u_4|u_1)),C_{3|1,4}(C_{3|1}(u_3|u_1),C_{4|1}(u_4|u_1))),\\
	&\qquad\qquad C_{5|1,3,4}(C_{3|1,4}(C_{3|1}(u_3|u_1),C_{4|1}(u_4|u_1)),C_{5|1,4}(C_{1|4}(u_1|u_4),C_{5|4}(u_5|u_4)))).
 \label{eq:5dimRvine}
\end{split}
\end{gather}
\end{example}

\begin{figure}[htbp]
	\centering
		\includegraphics[width=0.9\textwidth]{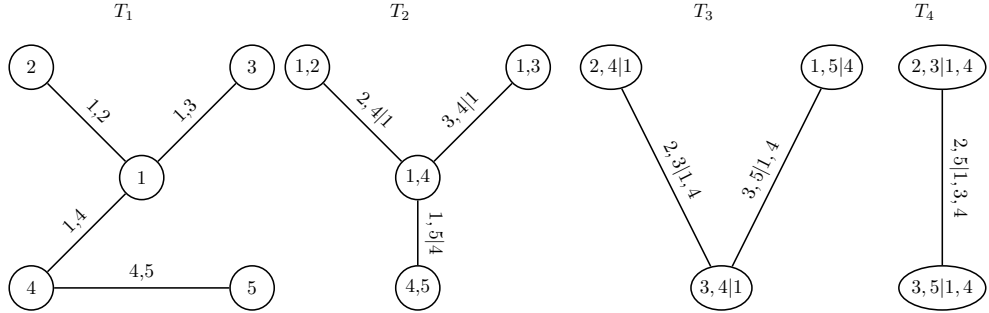}
	\caption{Tree structure of the 5 dimensional R-vine copula given in Example \ref{example}.}
	\label{fig:5dimRvine}
\end{figure}

There are two special cases of an R-vine tree structure $\mathcal{V}$. A line like structure of the trees is called D-vine in which each node has a maximum degree of 2, while a star structure is a canonical vine (C-vine) with a root node of degree $d-1$. All other nodes have degree 1. Statistical inference methods of D-vines are discussed in \citet{Aas_Czado}. A model selection algorithm as well as the maximum likelihood parameter estimation for C-vines is developed in \citet{CzadoSchepsmeierMin2011}.

\section{Information matrix ratio test}
\label{sec:IR}

A new approach for a GOF test for vine copulas is the information ratio (IR) test. It is inspired by the paper of \citet{ZhouSongThompson2012}, who propose an IR test for general model misspecification of the variance or covariance structures. Their test is related to the ``in-and-out-sample'' (IOS) test of \citet{PresnellBoos2004}, which is a likelihood ratio test. Additionally \citet{PresnellBoos2004} showed that the IOS test statistic can be expressed as a ratio of the expected Hessian and the expected outer product of the gradient. 
In particular, let $\bU=(U_1,\ldots,U_d)^T\in[0,1]^d$ be a random vector with copula distribution function $C_{\btheta}(u_1,\ldots,u_d)$. Further let
\begin{equation}
	\bbh(\btheta) := E\left[ \partial_{\btheta}^2 l(\btheta|\bU)  \right]\quad\text{and}\quad
	\bbc(\btheta) := E\left[ \partial_{\btheta} l(\btheta|\bU) \big( \partial_{\btheta} l(\btheta|\bU) \big)^T \right]
\label{eq:HC}
\end{equation}
the expected Hessian matrix of the random (vine) copula log-likelihood function $l(\btheta|\bU):=ln(c_{\btheta}(U_1,\ldots,U_d))$ and the expected outer product of the corresponding score function, respectively. Here $\partial_{\btheta}$ denotes the derivative with respect to the copula parameter $\btheta\in\bbr^{p}$. 
Now the information matrix ratio (IMR) is defined as
\begin{equation}
\Psi(\btheta) := -\bbh(\btheta)^{-1}\bbc(\btheta).
\label{eq:IMR}
\end{equation}
Our test problem is the reformulated general test problem
of \citet{White1982}:
\[
	H_0: \Psi(\btheta)=I_p\quad \text{against}\quad H_1: \Psi(\btheta)\neq I_p,
\]
where $I_p$ is the $p$-dimensional identity matrix.
To calculate the corresponding test statistic we follow \citet{Schepsmeier2013} and define the random matrices
\begin{equation}
	\bbh(\btheta|\bU):=\frac{\partial^2}{\partial^2\btheta}l(\btheta|\bU)\qquad\text{and}\qquad \bbc(\btheta|\bU):=\frac{\partial}{\partial\btheta}l(\btheta|\bU)\left(\frac{\partial}{\partial\btheta}l(\btheta|\bU)\right)^T
	\label{eq:HC2}
\end{equation}
using the log-likelihood function $l(\btheta|\bU)$ of the $\bU\sim RV(\mathcal{V},\mathcal{B}(\mathcal{V}),\btheta(\mathcal{B}(\mathcal{V})))$ model with specified vine tree sequence $\mathcal{V}$ and pair-copulas $\mathcal{B}(\mathcal{V})$ but unknown parameter $\btheta=\btheta(\mathcal{B}(\mathcal{V}))$.
Given an i.i.d.~sample $\bu_t\in[0,1]^d$ from $RV(\mathcal{V},\mathcal{B}(\mathcal{V}),\btheta(\mathcal{B}(\mathcal{V})))$ for $t=1,\ldots,n$ and the corresponding maximum likelihood estimate $\hat{\btheta}_n$ based on $\bu=(\bu_1^T,\ldots,\bu_n^T)^T$ the sample counter parts are
\[
	\hat{\bbh}_t(\hat{\btheta}_n):=\bbh(\hat{\btheta}_n|\bu_t)\in \bbr^{p\times p}\qquad\text{and}\qquad \hat{\bbc}_t(\hat{\btheta}_n):=\bbc(\hat{\btheta}_n|\bu_t)\in \bbr^{p\times p},
\]
respectively.
The sample equivalents to $\bbh(\btheta)$ and $\bbc(\btheta)$ are then
\begin{equation}
	\bar{\bbh}(\hat{\btheta}_n) := \frac{1}{n}\sum_{t=1}^n \hat{\bbh}_t(\hat{\btheta}_n)\qquad\text{and}\qquad
	\bar{\bbc}(\hat{\btheta}_n) := \frac{1}{n}\sum_{t=1}^n \hat{\bbc}_t(\hat{\btheta}_n).
	\label{eq:HCemp}
\end{equation}
Thus, we get as empirical version of (\ref{eq:IMR}): $\bar{\Psi}(\hat{\btheta}_n) := -\bar{\bbh}(\hat{\btheta}_n)^{-1}\bar{\bbc}(\hat{\btheta}_n).$\\
As in \citet{ZhouSongThompson2012} we define the {\it information ratio} \textbf{(IR)} statistic as
\begin{equation}
	IR_n := tr(\bar{\Psi}(\hat{\btheta}_n))/p,
\label{eq:IR}
\end{equation}
where $tr(A)$ denotes the trace of matrix $A$.
To derive the asymptotic normality of the test statistic $IR_n$ some conditions have to be set. The first two conditions $\mathcal{C}_1$ and $\mathcal{C}_2$ guarantee the existence of the gradient and the Hessian matrix.
\begin{itemize}
	\item[$\mathcal{C}_1:$] The density function (\ref{eq:density}) is twice continuous differentiable with respect to $\btheta$.
	\item[$\mathcal{C}_2:$] -$\bar{\bbh}(\hat{\btheta}_n)$ and $\bar{\bbc}(\hat{\btheta}_n)$ are positive definite.
\end{itemize}
Condition $\mathcal{C}_3-\mathcal{C}_5$ are more technical and are the same as in \citet{PresnellBoos2004}.
\begin{itemize}
	\item[$\mathcal{C}_3:$] There exist $\btheta_0$ such that $\hat{\btheta}_n\stackrel{P}{\rightarrow}\btheta_0$ as $n\rightarrow\infty$. 
	\item[$\mathcal{C}_4:$] The estimator $\hat{\btheta}_n\in\bbr^p$ has an approximating influence curve function $h(\btheta|\bu)$ such that
	\[
		\hat{\btheta}_n-\btheta = \frac{1}{n}\sum_{i=1}^n h(\btheta_0|U_i)+R_{n1},
	\]
	where $\sqrt{n}R_{n1}\stackrel{P}{\rightarrow}0$ as $n\rightarrow\infty$, $E[h(\btheta_0|U_1]=0$, and $cov(h(\btheta_0|U_1)$ is finite.
	\item[$\mathcal{C}_5:$] The real-valued function $q(\btheta|\bu)$ possesses second order partial derivatives with respect to $\btheta$, and	
\begin{enumerate}
	\item[(a)] $Var(q(\btheta_0|U_1))$ and $E\left[\frac{\partial}{\partial \btheta}q(\btheta_0|U_1)\right]$ are finite.
	\item[(b)] There exists a function $M(\bu)$ such that for all $\btheta$ in a neighborhood of $\btheta_0$ and all $j,k\in\{1,\ldots,p\}, \left|\frac{\partial^2}{\partial^2\btheta}q(\btheta|\bu)_{jk}\right|\leq M(\bu)$, where $E[M(U_1)]<\infty$. 
\end{enumerate}
\end{itemize}
In the following $vech(A)\in\bbr^{p(p+1)/2}$ represents the vectorization of the symmetric matrix $A\in\bbr^{p\times p}$. Let $\boldsymbol{W}:=(W_1,\ldots,W_{p(p+1)})^T = (vech(\bar{\bbc}(\hat{\btheta}_n)),vech(\bar{\bbh}(\hat{\btheta}_n)))^T \in\bbr^{p(p+1)}$, then \citet{PresnellBoos2004} showed that
\[
	\sqrt{n}\frac{\boldsymbol{W}-\boldsymbol{\mu}_W}{\Sigma_W^{1/2}} \stackrel{d}{\rightarrow} N_{p(p+1)}(\boldsymbol{0}_{p(p+1)},I_{p(p+1)}),
\]
where $\boldsymbol{\mu}_W$ is the mean vector and $\Sigma_W$ is the asymptotic covariance matrix of $\boldsymbol{W}$. Here $\boldsymbol{0}_{p(p+1)}:=(0,\ldots,0)^T$ is the $p(p+1)$-dimensional zero vector and $I_{p(p+1)}$ is the $p(p+1)$-dimensional identity matrix.
Furthermore, let $D(\hat{\btheta}_n)$ define the partial derivatives of $IR_n$ taken with respect to the components of $\boldsymbol{W}$, i.e.
\[
	D(\hat{\btheta}_n):=\left( \frac{\partial IR_n}{\partial W_i} \right)_{i=1,\ldots,p(p+1)} \in\bbr^{p(p+1)}.
\]

\begin{theorem}
Let $\bU\sim RV(\mathcal{V},\mathcal{B}(\mathcal{V}),\btheta(\mathcal{B}(\mathcal{V})))$ satisfy the conditions $\mathcal{C}_1-\mathcal{C}_3$. 
Further, let $\mathcal{C}_4$ hold for the maximum likelihood estimator $\hat{\btheta}_n$ with $h(\btheta_0|\bu):=\bbc(\btheta_0)^{-1}\frac{\partial}{\partial\btheta}l(\btheta_0|\bu)$. 
Additionally, the condition $\mathcal{C}_5$ has to be satisfied for both $q(\btheta|\bu):=-\frac{\partial^2}{\partial^2\btheta}l(\btheta|\bu)_{jk}$ and $q(\btheta|\bu):=\left(\frac{\partial}{\partial\btheta}l(\btheta|\bu)\left(\frac{\partial}{\partial\btheta}l(\btheta|\bu)\right)^T\right)_{jk}$ for each $j,k\in\{1,\ldots,p\}$.
Then the IR test statistic
\[
	Z_n := \frac{IR_n-1}{\sigma_{IR}} \stackrel{D}{\rightarrow} N(0,1) \text{ as } n\rightarrow \infty,
\]
where $\sigma_{IR}$ is the standard error of the IR test statistic, defined as
\[
	\sigma_{IR}^2 := \frac{1}{n}D^T\Sigma_W D.
\]
Here $\Sigma_W/n$ is the asymptotic covariance matrix arising from the joint asymptotic normality of $vech(\bar{\bbc}(\hat{\btheta}_n))$ and $vech(\bar{\bbh}(\hat{\btheta}_n))$ defined above.
By $D$ we denote the $p(p+1)$-dimensional vector of partial derivatives of $IR_n$ taken with respect to the components of $\boldsymbol{W}$ and evaluated at their limits in probability, i.e.~$
	D:=D(\hat{\btheta}_n)|_{\hat{\btheta}_n\stackrel{P}{\rightarrow} \btheta_0}$.
\label{prop:asy}
\end{theorem}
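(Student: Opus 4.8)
The plan is to recognize the statistic $IR_n$ as a smooth scalar-valued function of the stacked vector $\boldsymbol{W}=(vech(\bar{\bbc}(\hat{\btheta}_n)),vech(\bar{\bbh}(\hat{\btheta}_n)))^T$ and to transfer the joint asymptotic normality of $\boldsymbol{W}$ onto $IR_n$ via the multivariate delta method. First I would invoke the Presnell--Boos result quoted above: under $\mathcal{C}_1$--$\mathcal{C}_5$, applied to the two families of coordinate maps $q$ specified in the theorem (the entries of the negative Hessian and of the outer product of the score) and with the maximum likelihood influence curve $h(\btheta_0|\bu)=\bbc(\btheta_0)^{-1}\partial_{\btheta}l(\btheta_0|\bu)$ supplied by $\mathcal{C}_4$, one has $\sqrt{n}(\boldsymbol{W}-\boldsymbol{\mu}_W)\stackrel{d}{\rightarrow}N_{p(p+1)}(\boldsymbol{0},\Sigma_W)$, where $\boldsymbol{\mu}_W=(vech(\bbc(\btheta_0)),vech(\bbh(\btheta_0)))^T$ is the probability limit of $\boldsymbol{W}$. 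It is this step that absorbs the sampling uncertainty of $\hat{\btheta}_n$ into the covariance $\Sigma_W$.

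Next I would write $IR_n=g(\boldsymbol{W})$ with $g(\boldsymbol{W})=\tfrac{1}{p}\,tr\!\big(-\bar{\bbh}(\hat{\btheta}_n)^{-1}\bar{\bbc}(\hat{\btheta}_n)\big)$, i.e. the trace of the empirical ratio $\bar{\Psi}(\hat{\btheta}_n)$ divided by $p$, and show that $g$ is continuously differentiable in a neighborhood of $\boldsymbol{\mu}_W$. The only delicate point is the inversion: by $\mathcal{C}_2$ the matrix $-\bar{\bbh}(\hat{\btheta}_n)$ is positive definite, hence invertible, and the maps $A\mapsto A^{-1}$ and $(A,B)\mapsto tr(A^{-1}B)$ are smooth on the open set of invertible matrices. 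A uniform law of large numbers under the domination in $\mathcal{C}_5$, together with consistency $\mathcal{C}_3$ and the continuity in $\mathcal{C}_1$, places $\bar{\bbh}(\hat{\btheta}_n)$ and $\bar{\bbc}(\hat{\btheta}_n)$ in such a neighborhood with probability tending to one, so $g$ is differentiable at the relevant argument.

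I would then compute the gradient $D=\nabla g(\boldsymbol{\mu}_W)$ componentwise. Using the standard differentials $\partial\,tr(A^{-1}B)/\partial B=(A^{-1})^T$ and $\partial\,tr(A^{-1}B)/\partial A=-(A^{-1}BA^{-1})^T$, together with the bookkeeping forced by $vech$ (differentiating only with respect to the free entries of the symmetric matrices, which brings in the usual duplication-matrix factors and the sign from $-\bar{\bbh}$), produces exactly the vector $D$ of partial derivatives of $IR_n$ with respect to the coordinates of $\boldsymbol{W}$ evaluated at $\btheta_0$. The delta method applied to the differentiable map $g$ and the CLT of the first step then give $\sqrt{n}\,(IR_n-g(\boldsymbol{\mu}_W))\stackrel{d}{\rightarrow}N(0,D^T\Sigma_W D)$.

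Finally I would center and normalize. Since the data are genuinely drawn from the $RV$ model, the Bartlett identity $-\bbh(\btheta_0)=\bbc(\btheta_0)$ holds, so $\Psi(\btheta_0)=-\bbh(\btheta_0)^{-1}\bbc(\btheta_0)=I_p$ and hence $g(\boldsymbol{\mu}_W)=tr(I_p)/p=1$; the centering constant is therefore exactly the hypothesized value $1$. Substituting $\sigma_{IR}^2=\tfrac{1}{n}D^T\Sigma_W D$ and rescaling gives $Z_n=(IR_n-1)/\sigma_{IR}=\sqrt{n}\,(IR_n-1)/\sqrt{D^T\Sigma_W D}\stackrel{d}{\rightarrow}N(0,1)$, as claimed. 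I expect the real work to lie in two places: verifying the finite-variance and domination requirements of $\mathcal{C}_5$ for both families of coordinate functions $q$ so that the Presnell--Boos CLT is legitimately available (this is where the estimation of $\hat{\btheta}_n$ genuinely contributes through the MLE influence curve of $\mathcal{C}_4$), and carrying out the $vech$-level gradient computation cleanly; the delta-method step and the evaluation of the centering constant are then immediate.
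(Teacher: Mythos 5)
Your proposal is correct and follows essentially the same route as the paper: the paper's own proof is a one-line appeal to Theorem 3 of Presnell and Boos (2004), whose argument is precisely the delta-method reasoning you reconstruct (joint CLT for $\boldsymbol{W}$, smoothness of $g(\boldsymbol{W})=tr(-\bar{\bbh}^{-1}\bar{\bbc})/p$, gradient $D$, and centering at $1$ via the Bartlett identity). The only difference is that you spell out explicitly what the paper delegates to the citation, which makes your version more self-contained but not a different proof.
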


\begin{proof}
The proof follows directly from the proof of Theorem 3 in \citet{PresnellBoos2004}, since we have a fully specified likelihood and the conditions of Theorem 3 are assumed to be satisfied for vine copulas considered in Theorem \ref{prop:asy}.
\end{proof}

Since the theoretical asymptotic variance $\sigma_{IR}^2$ is quite difficult to compute, an empirical version is used in practice.
To evaluate the standard error $\sigma_{IR}$ numerically, \citet{ZhouSongThompson2012} suggest a perturbation resampling approach.
Furthermore, \citet{PresnellBoos2004} state that the convergence to normality is slow and thus they suggest obtaining p-values using a parametric bootstrap under the null hypothesis.

The condition $\mathcal{C}_4$ for $q(\btheta|\bu):=-\frac{\partial^2}{\partial^2\btheta}l(\btheta|\bu)_{jk}$ implies, that the copula density function (\ref{eq:density}) is four times differentiable with respect to $\btheta$. Furthermore, the first and second moment of the second derivative has to be finite. The vine copula density is four times differentiable if all selected pair-copulas are four times differentiable. These assumptions are satisfied for the elliptical Gauss and Student's t-copula as well as for the parametric Archimedean copulas in all dimensions.


Let $\alpha\in(0,1)$ and $Z_n$ as in Theorem \ref{prop:asy}. Then the test
	\begin{equation*}
		\text{Reject } H_0: \Psi(\btheta)=I_p\quad \text{against}\quad \Psi(\btheta)\neq I_p\qquad
		\Leftrightarrow\quad Z_n > \Phi^{-1}(1-\alpha)
	\end{equation*}
	is an asymptotic $\alpha$-level test. Here $\Phi^{-1}$ denotes the quantile of a $N(0,1)$-distribution.

\section{Further goodness-of-fit tests for vine copulas}
\label{sec:gof}

In the recent years many GOF test were suggested for copulas. The most promising ones were investigated in \citet{Genest2009} and \citet{Berg}. However only the size and power of the elliptical and one-parametric Archimedean copulas for $d\in\{2,4,8\}$ were analyzed. The multivariate case is therefore poorly addressed. For vine copulas little is done. A first test for vine copulas was suggested but not investigated in \citet{Aas_Czado}. Their GOF is based on the multivariate PIT and an aggregation introduced by \citet{Breymann2003}. After aggregation standard univariate GOF tests such as the Anderson-Darling (AD), the Cramér-von Mises (CvM) or the Kolmogorov-Smirnov (KS) tests are applied. They are described in more detail in \ref{appendix:tests}. We will denote the resulting tests as \textbf{Breymann}. 

Similar approaches based on the multivariate PIT are proposed by \citet{BergBakken2007}. 
Beside new aggregation functions forming univariate test data, they perform the aggregation step on the ordered PIT output data $\boldsymbol{y}_{(1)}^T,\ldots,\boldsymbol{y}_{(d)}^T$ instead of $\boldsymbol{y}_1^T,\ldots,\boldsymbol{y}_d^T$. Again standard univariate GOF tests are applied. These approaches will be called \textbf{Berg} and \textbf{Berg2}, respectively.

\citet{BergAas2009} applied a test for $H_0: C\in \calc_0$ against $H_1: C\notin\calc_0$ based on the empirical copula process (ECP) to a 4-dimensional vine copula. As the Breymann test, their GOF test is not described in detail or investigated with respect to its power. We will denote this test as \textbf{ECP}.
An extension of the ECP-test is the combination of the multivariate PIT approach with the ECP. The general idea is that the transformed data of a multivariate PIT should be ``close'' to the independence copula $C_{\bot}$ \citet{Genest2009}. Thus a distance of CvM or KS type between them is considered. This approach is called \textbf{ECP2}.

\citet{Schepsmeier2013} was the first who analyzed the power of a GOF test for vine copulas in detail. His approach is, as our new IR GOF test, based on the information matrix equality and specification test introduced by \citet{White1982}. His power studies show, that the convergence to the asymptotic distribution function of the test statistic is very slow. Further, given copula data with sample size smaller than 10000 the test does not reach its nominal level based on asymptotic p-values. But using bootstrapped p-values the test shows very good power behavior. We denote this approach as \textbf{White}.

In the forthcoming sections we will introduce the vine copula test of \citet{Schepsmeier2013}, the multivariate PIT based GOF such as the ones of \citet{Breymann2003} and \citet{BergBakken2007}, and the two ECP based GOF tests.
A first overview of the considered GOF tests is given in Figure \ref{fig:diagramm}. 

\begin{sidewaysfigure}[htbp]
	\centering
		\includegraphics[width=1.00\textwidth]{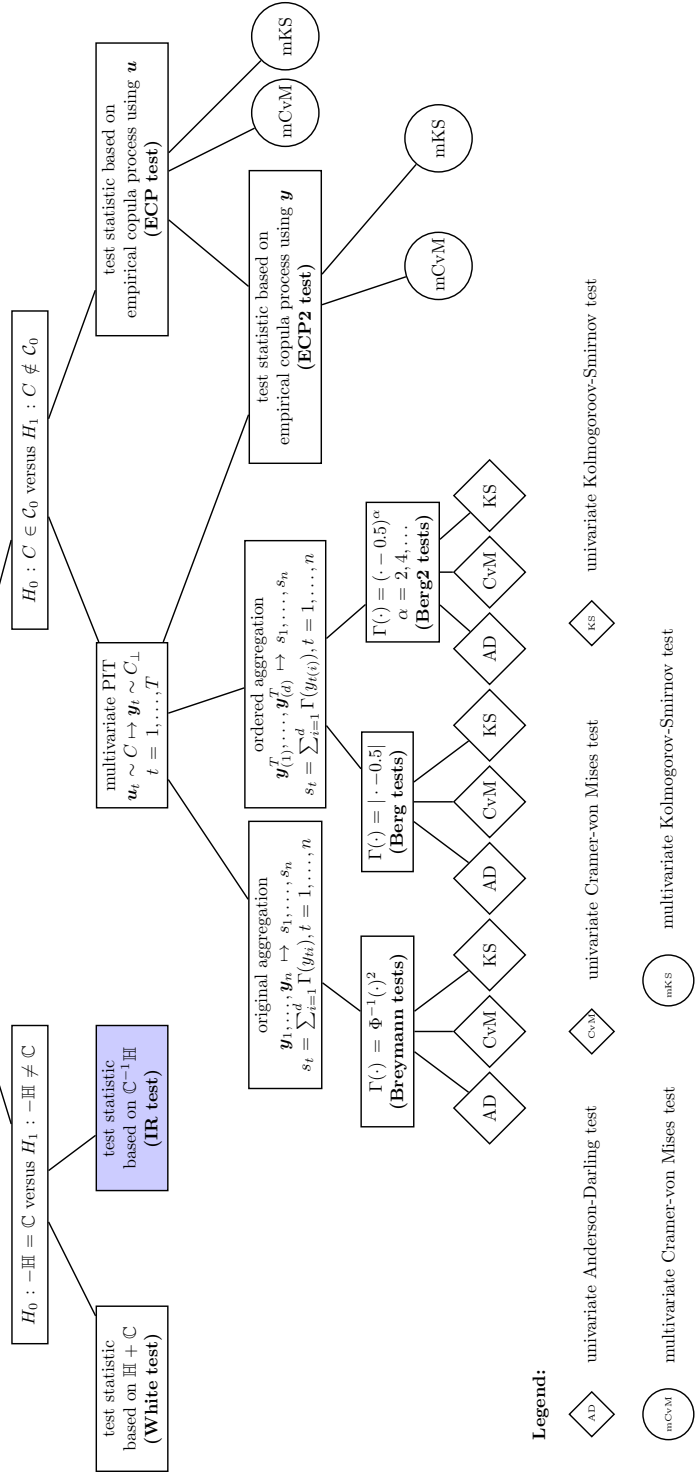}
	\caption{Structured overview of the suggested goodness-of-fit test hypotheses and their test statistics.}
	\label{fig:diagramm}
\end{sidewaysfigure}

\noindent

\subsection{White's information matrix test}
\label{subsec:White}

The GOF test of \citet{Schepsmeier2013} uses White's information matrix equality and specification test. 
It is a rank-based 
test which is asymptotically pivotal, i.e. the asymptotic distribution is independent of model parameters.

Let $\bU$ be a random vector with vine copula log-likelihood $l(\btheta|\bU)$. 
Further let $\bbh(\btheta)$ and $\bbc(\btheta)$ be defined as in Equation (\ref{eq:HC})
 the expected Hessian matrix and the expected outer product of the score function, respectively.
Considering the Bartlett identity
we can formulate the vine copula misspecification test problem as
\begin{equation}
	H_0: \bbh(\btheta_0) + \bbc(\btheta_0) = 0 \text{ against } H_1: \bbh(\btheta_0) + \bbc(\btheta_0) \neq 0.
	\label{eq:test}
\end{equation}
Here, $\btheta_0$ denotes the true value of the vine copula parameter vector.
Following the notation of \citet{Schepsmeier2013} we denote by 
$
\boldsymbol{d}(\btheta|\bU) := vech(\bbh(\btheta|\bU) + \bbc(\btheta|\bU)) \in \bbr^{\frac{p(p+1)}{2}},
$
the vectorized sum of $\bbh(\btheta|\bU)$ and $\bbc(\btheta|\bU)$ defined in (\ref{eq:HC2}). Its empirical version is denoted by $\bar{\boldsymbol{d}}(\hat{\btheta}_n):=vech(\bar{\bbh}(\hat{\btheta}_n) + \bar{\bbc}(\hat{\btheta}_n))$, where $\bar{\bbh}(\hat{\btheta}_n)$ and $\bar{\bbc}(\hat{\btheta}_n)$ are defined in (\ref{eq:HCemp}).
Further, we define the expected gradient matrix of the random vector $\boldsymbol{d}(\btheta|\bU)$ as
$$\nabla D_{\btheta}:=E\left[\partial_{\btheta_k}\boldsymbol{d}_l(\btheta|\bU) \right]_{l=1,\ldots,\frac{p(p+1)}{2},k=1,\ldots,p} \in \bbr^{\frac{p(p+1)}{2}\times p}.$$

Now, under suitable regularity conditions \citep[A1-A10 in][]{White1982}, assuring that $l(\hat{\btheta}_n|\bu_t)$ is a continuous measurable function and its derivatives exist, the following is shown. Given a copula model \citep{HuangProkhorov2011} or a vine copula model \citep{Schepsmeier2013}, the asymptotic covariance matrix of $\sqrt{n}\bar{\boldsymbol{d}}(\hat{\btheta}_n)$ is given by
{\small
\begin{equation*}
	V_{\btheta_0} = \mathbbm{E}\bigg[(d(\btheta_0|\bU)-\nabla D_{\btheta_0}\bbh(\btheta_0)^{-1}\partial_{\btheta_0} l(\btheta_0|\bU)) \big(d(\btheta_0|\bU)-\nabla D_{\btheta_0}\bbh(\btheta_0)^{-1}\partial_{\btheta_0} l(\btheta_0|\bU)\big)^T\bigg].
	\label{eq:varianceMatrix}
\end{equation*}}\noindent
Here, $\hat{\btheta}_n$ is again the maximum likelihood estimate of $\btheta_0$ given $n$ i.i.d. samples.
For details on the estimation of $\nabla D_{\btheta_0}$ and $V_{\btheta_0}$ we refer to \citet{Schepsmeier2013}.

Thus, the test statistic of the \textbf{White} test is
\begin{equation}
	\text{\textbf{White:}}\qquad \calT_n = n\left(\bar{\boldsymbol{d}}(\hat{\btheta}_n)\right)^T\hat{V}_{\hat{\btheta}_n}^{-1} \bar{\boldsymbol{d}}(\hat{\btheta}_n),
	\label{eq:teststatistic}
\end{equation}
where $\hat{V}_{\hat{\btheta}_n}^{-1}$ is the estimated asymptotic variance matrix given $n$ observation.

The test statistic $\calT_n$ follows asymptoticly a $\chi^2$ distributed random variable with degrees-of-freedom $p(p+1)/2$, where $p$ is the number of vine copula parameters. But \citet{Schepsmeier2013} showed that even for small dimensions the asymptotic theory does not hold for relative large data sets, for example $n=1000$ in 5 dimensions. However for bootstrapped p-values the power is quite satisfying, i.e.~the tests has power against false vine copula model specifications. 
Denoting the $1-\alpha$ quantile of the $\chi^2_{p(p+1)/2}$ distribution by $s_{1-\alpha}$ the White test rejects the null hypothesis (\ref{eq:test}) if $\calT_n>s_{1-\alpha}$. In the bootstrapped case the $\chi^2_{p(p+1)/2}$ distribution is replaced by the empirical distribution function of the bootstrapped test statistics. 

\subsection{Rosenblatt's transform tests}
\label{subsec:PIT}

The vine copula GOF test suggested by \citet{Aas_Czado} is based on the multivariate probability integral transform (PIT) of \citet{Rosenblatt1952} applied to copula data $\bu=(\bu_{1}^T,\ldots,\bu_d^T), \bu_i=(u_{1i},\ldots,u_{ni})^T,$ $i=1,\ldots,d$ and a given estimated vine copula model $(\mathcal{V},\mathcal{B}(\mathcal{V}),\hat{\btheta}(\mathcal{B}(\mathcal{V})))$. The general multivariate PIT definition and the explicit algorithm for the R-vine copula model is given in \ref{appendix:rosenblatt}.
The PIT output data $\boldsymbol{y}=(\boldsymbol{y}_1^T,\ldots,\boldsymbol{y}_d^T), y_i=(y_{1i},\ldots,y_{ni})^T, i=1\ldots,d$ is assumed to be i.i.d.~with $y_{it}\sim U[0,1]$ for $t=1,\ldots,n$. Now, a common approach in multivariate GOF testing is dimension reduction. Here the aggregation is performed by
\begin{equation}
	s_{t} := \sum_{i=1}^d \Gamma(y_{ti}),\quad t=\{1,\ldots,n\},
	\label{eq:weightFunction}
\end{equation}
with a weighting function $\Gamma(\cdot)$.
\begin{figure}[htb]
\[
\begin{pmatrix}
u_{11} & \ldots & u_{1d} \\
\vdots & & \vdots \\
u_{n1} & \ldots & u_{nd} \\
\end{pmatrix} \xrightarrow[(PIT)]{Rosenblatt}
\begin{pmatrix}
y_{11} & \ldots & y_{1d} \\
\vdots & & \vdots \\
y_{n1} & \ldots & y_{nd} \\
\end{pmatrix} \xrightarrow[\Gamma(y_{ti})]{Aggregation}
\begin{pmatrix} 
s_1 \\
\vdots \\
s_n \\
\end{pmatrix} \xrightarrow[GOF tests]{univariate}
\]
\caption{Schematic procedure of the PIT based goodness-of-fit tests.}
\label{fig:PIT}
\end{figure}
\noindent
\citet{Breymann2003} suggest as weight function the squared quantile of the standard normal distribution, i.e. $\Gamma(y_{ti}) = \Phi^{-1}(y_{ti})^2$, with $\Phi(\cdot)$ denoting the $N(0,1)$ cdf. Finally, they apply a univariate Anderson-Darling test to the univariate test data $s_t$. The three step procedure is summarized in Figure \ref{fig:PIT}.


\citet{BergBakken2007} point out that the approach of \citet{Breymann2003} has some weaknesses and limitations. The weighting function $\Phi^{-1}(y_{ti})^2$ strongly weights data along the boundaries of the $d$-dimensional unit hypercube.
They suggest a generalization and extension of the PIT approach. First, they propose two new weighting functions for the aggregation in (\ref{eq:weightFunction}): 
\[
	\Gamma(y_{ti})=|y_{ti}-0.5|\quad \text{and}\quad
	\Gamma(y_{ti})=(y_{ti}-0.5)^{\alpha},  \alpha=(2,4,\ldots).
\]

Further, they use the order statistics of the random vector $\boldsymbol{Y}=(Y_1,\ldots,Y_d)$, denoted by $Y_{(1)}\leq Y_{(2)}\leq\ldots\leq Y_{(d)}$ with observed values $y_{(1)}<y_{(2)}\leq\ldots\leq y_{(d)}$. 
The calculation of the order statistics PIT can be simplified by using the fact that
$Y_{(1)}\leq Y_{(2)}\leq\ldots\leq Y_{(d)}$ are i.i.d.~$U(0,1)$ random variables and $\{Y_{(i)}, 1\leq i\leq d\}$ is a Markov chain \cite[Theorem 2.7]{David1981}. Now Theorem 1 of \citet{Deheuvels1984} can be applied and the calculation of the PIT ease to
\begin{equation}
	v_i:=F_{Y_{(i)}|Y_{(i-1)}}(y_{(i)}) = 1-\left(\frac{1-y_{(i)}}{1-y_{(i-1)}}\right)^{d-(i-1)},\quad i=1,\ldots,d, y_{(0)}=0.
	\label{eq:cpit}
\end{equation}

Now, \citet{BergBakken2007} construct the aggregation as the sum of a product of two weighting functions applied to $\boldsymbol{y}$ and $\boldsymbol{v}=(v_1,\ldots,v_d)$, respectively, i.e.
\[
	s_{t} := \sum_{i=1}^d \Gamma_{\boldsymbol{y}}(y_{ti}) \cdot \Gamma_{\boldsymbol{v}}(v_{ti}),\quad t=\{1,\ldots,n\}.
\]
Here $\Gamma_{\boldsymbol{y}}(\cdot)$ and $\Gamma_{\boldsymbol{v}}(\cdot)$ are chosen from the suggested weighting functions including the one of \citet{Breymann2003}. Let $S_t$ be the corresponding random aggregation of $s_t$.
If $\Gamma_{\boldsymbol{y}}(\cdot)=1$ and $\Gamma_{\boldsymbol{v}}(\cdot)=\Phi^{-1}(\cdot)^2$ or vise versa,
the asymptotic distribution of $S_{t}$ follows a $\chi^2_d$ distributed random variable \citep{Breymann2003}. In all other cases the asymptotic distribution of $S_t$ is unknown.

The combinations with $\Gamma_{\boldsymbol{y}}(y_{ti})=|y_{ti}-0.5|$ and $\Gamma_{\boldsymbol{y}}(y_{ti})=(y_{ti}-0.5)^{\alpha}$ for $\alpha=2,4,\ldots$ performed very poorly in the simulation setup considered later. Thus we will not include them in the forthcoming power study. Only the weighting functions listed in Table \ref{tab:SpecificationsOfTheGoodnessOfFitTestA2} will be investigated.
As final test statistics to the test data $s_{t}$ we apply the univariate Cramér-von Mises (CvM) or Kolmogorov-Smirnov (KS) test, as well as the mentioned univariate Anderson-Darling (AD) test. All three test statistics are given in  \ref{appendix:tests} for the convenience of the reader.

\begin{table}[htbp]
	\centering
		\begin{tabular}{lll}
			\toprule
			Short & \multicolumn{2}{c}{Description} \\
			\midrule
			 \textbf{Breymann}	 & $\Gamma_{\boldsymbol{y}}(y_{ti}) = \Phi^{-1}(y_{ti})$ & $\Gamma_{\boldsymbol{v}}(v_{ti}) = 1$ \\
			 \textbf{Berg}			 & $\Gamma_{\boldsymbol{y}}(y_{ti}) = 1$ & $\Gamma_{\boldsymbol{v}}(v_{ti}) = |v_{ti}-0.5|$ \\
			 \textbf{Berg2}			 & $\Gamma_{\boldsymbol{y}}(y_{ti}) = 1$ & $\Gamma_{\boldsymbol{v}}(v_{ti}) = (v_{ti}-0.5)^2$ \\
			\bottomrule
		\end{tabular}
	\caption{Specifications of the PIT based goodness-of-fit tests.}
	\label{tab:SpecificationsOfTheGoodnessOfFitTestA2}
\end{table}

Let $s_{1-\alpha}^{AD}, s_{1-\alpha}^{CvM}$ and $s_{1-\alpha}^{KS}$ denote the $1-\alpha$ quantile of the univariate AD, CvM or KS test statistic, respectively. Then the test rejects the null hypothesis (\ref{eq:test2}) if $W^2_n>s_{1-\alpha}^{AD}, \omega^2>s_{1-\alpha}^{CvM}$ or $D_n>s_{1-\alpha}^{KS}$, respectively.

\subsection{Empirical copula process tests}
\label{subsec:ecp}

A rather different approach is suggested by \citet{GenestRemillard2007} for copula GOF testing. They propose to use the difference of the copula distribution function $C_{\hat{\btheta}_n}(\bu)$ with estimated parameter $\hat{\btheta}_n$ and the empirical copula $\hat{C}_n(\bu)$ (see Equation (\ref{eq:empCopula})) given the copula data $\bu$. This stochastic process is known as the empirical copula process (ECP) and will be used to test (\ref{eq:test2}).
For a vine copula model the copula distribution function $C_{\hat{\btheta}_n}(\bu)$ is not given in closed form. Thus a bootstrapped version has to be used.

Now, the ECP $\hat{C}_n(\bu)-C_{\hat{\btheta}_n}(\bu)$ is utilized in a multivariate Cramér-von Mises (mCvM) or multivariate Kolmogorov-Smirnov (mKS) based test statistic. The multivariate distribution functions $\hat{F}_n(\boldsymbol{y})$ and $F(\boldsymbol{y})$ in Equation (\ref{eq:CvM}) and (\ref{eq:KS}) of \ref{subsec:CvM} are replaced by their (vine) copula equivalents $\hat{C}_n(\bu)$ and $C_{\hat{\btheta}_n}(\bu)$, respectively. Thus we consider
\begin{equation*}
\begin{split}
	&\text{\textbf{ECP-mCvM:}}\quad n \omega^2_{ECP} := n \int_{[0,1]^d} (\hat{C}_n(\bu)-C_{\hat{\btheta}_n}(\bu))^2 d\hat{C}_n(\bu) 
	\quad\text{ and} \\
	&\text{\textbf{ECP-mKS:}}\quad D_{n,ECP} := \sup_{\bu\in[0,1]^d} |\hat{C}_n(\bu)-C_{\hat{\btheta}_n}(\bu)|.
\end{split}
\end{equation*}

To avoid the calculation/approximation of $C_{\hat{\btheta}_n}(\bu)$ \citet{Genest2009} and other authors propose to use the transformed data $\boldsymbol{y}=(y_1,\ldots,y_d)$ of the PIT approach and plug them into the ECP. The idea is to calculate the distance between the empirical copula $\hat{C}_n(\boldsymbol{y})$ of the transformed data $\boldsymbol{y}$ and the independence copula $C_{\bot}(\boldsymbol{y})$.
Thus, the considered multivariate CvM and KS test statistics are
\begin{equation*}
\begin{split}
	&\text{\textbf{ECP2-mCvM:}}\quad n \omega^2_{ECP2} := n \int_{[0,1]^d} (\hat{C}_n(\boldsymbol{y})-C_{\bot}(\boldsymbol{y}))^2 d\hat{C}_n(\boldsymbol{y})
	\quad\text{ and} \\
	&\text{\textbf{ECP2-mKS:}}\quad D_{n,ECP2} := \sup_{\boldsymbol{y}\in[0,1]^d} |\hat{C}_n(\boldsymbol{y})-C_{\bot}(\boldsymbol{y})|,
\end{split}
\end{equation*}
respectively. 
Since neither the mCvM nor the mKS test statistic has a known asymptotic distribution function a parametric bootstrap procedure has to be applied to estimate p-values. Thus a computer intensive double bootstrap procedure has to be implemented. As before the test rejects the null hypothesis (\ref{eq:test2}) if $n\omega^2_{ECP}>s_{1-\alpha}^{mCvM}$ or $D_{n,ECP}>s_{1-\alpha}^{mKS}$, respectively. Here $s_{1-\alpha}^{mCvM}$ and $s_{1-\alpha}^{mKS}$ are the $1-\alpha$ quantiles of the mCvM and mKS test statistic's empirical distribution function, respectively. Similar rejection regions are defined for the ECP2 test statistics.

\section{Power study}
\label{sec:powerstudy}

To investigate the power behavior of the proposed GOF tests and to compare them to each other we conduct several Monte Carlo studies of different dimension. The second property of interest is the ability of the test to maintain the nominal level or size, usually chosen at 5\%.

If a test has the probability of rejection less than or equal to a small number $\alpha\in(0,1)$, called the level of significance, for the hypothesis $H_0$, then such a test is called a $\alpha$-level test. We speak of rejecting $H_0$ at level $\alpha$. Common values for $\alpha$ are 0.05 and 0.01. Since a test of level $\alpha$ is also a test of level $\alpha^{\prime}>\alpha$, the smallest such $\alpha$ is called \textit{size} of the test and is the maximum probability of type I error \cite[p.217]{Bickel2007}. The \textit{power} of a test against the alternative $H_1$ is the probability of rejecting $H_0$ when $H_1$ is true. It is often denoted as $\beta$.

Given an observed test statistic $t_n$ of $\calT_n$ the corresponding  p-value is defined as
\[
	p(t_n):=P(\calT_n\geq t_n).
\] 
Here $\calT_n$ represents one of the test statistics $\calT_n$ (White), $Z_n$ (IR), $W^2_n$ (AD), $n\omega^2$ (CvM or mCvM) and $D_n$ (KS or mKS) introduced in Section \ref{sec:IR} and \ref{sec:gof}.

For a given model $M_1$ consider the random statistic $\calT_n(M_1)$ based on an i.i.d.~sample of size $n$ from model $M_1$ with observed value $t_n(M_1)$. Define the random variable $Z_{M_1}:=p(\calT_n(M_1))$ which takes on values $z_{M_1}=p(t_n(M_1))$ in $(0,1)$. Let $F_{M_1}(\cdot)$ denote the distribution function of $Z_{M_1}$, then $F_{M_1}(\alpha)$ is the \textit{actual size} of the test at level $\alpha$ (\textit{nominal size}). A test maintains its nominal level if $F_{M_1}(\alpha)=\alpha$. As estimates of the p-value and the distribution function we use their empirical versions. Therefore generate $B$ bootstrap realizations of the test statistic $\calT_n(M_1)$, denoted as $t_n^j(M_1), j=1,\ldots,B$, when $n$ observations are drawn from model $M_1$.Then the estimate of $p_{M_1}^j:=p(T_n^j(M_1))$ is given as
\[
	\hat{p}_{M_1}^j:=\hat{p}(t_n^j(M_1)):=\frac{1}{B}\sum_{r=1}^B \boldsymbol{1}_{\{t_n^r(M_1)\geq t_n^j(M_1)\}}.
\]
Further, the estimated size at level $\alpha\in(0,1)$ is defined as
$
	\hat{F}_{M_1}(\alpha):=\frac{1}{B}\sum_{r=1}^B \boldsymbol{1}_{\{\hat{p}_{M_1}^r\leq\alpha\}}.
$
Generating $B$ i.i.d.~data sets of an alternative model $M_2$ in $H_1$ to estimate $F_{M_2}(\alpha)$ by $\hat{F}_{M_2}(\alpha)$ we get the \textit{power} of the test when the alternative $H_1$ holds.

\subsection{General simulation setup}
\label{subsec:testProcedure}

For the general simulation setup we follow the procedure of \citet{Schepsmeier2013}. 
Given a vine copula model $M_1=RV(\mathcal{V}_1,\mathcal{B}_1(\mathcal{V}_1),\btheta_1(\mathcal{B}_1(\mathcal{V}_1)))$ we test for each proposed GOF test if it has suitable power against an alternative vine copula model $M_2=RV(\mathcal{V}_2,\mathcal{B}_2(\mathcal{V}_2),\btheta_2(\mathcal{B}_2(\mathcal{V}_2)))$, where $M_2\neq M_1$, as follows:
\begin{algorithmic}[1]
\STATE Set vine copula model $M_1$.
\STATE Generate a copula data sample of size $n=1000$ from model $M_1$ (pre-run).
\STATE Given the data of the pre-run select and estimate $M_2$ using e.g.~the step-wise selection algorithm of 		\citet{DissmannBrechmannCzadoKurowicka2011}.
\FOR{$r=1,\ldots,B$}
	\STATE Generate copula data $\bu_{M_1}^r=(\bu_{M_1}^{1r},\ldots,\bu_{M_1}^{dr})$ from $M_1$ of size $n$.
	\STATE Estimate $\btheta_1(\mathcal{B}_1(\mathcal{V}_1))$ of model $M_1$ given data $\bu_{M_1}^r$ and denote it by\\ $\hat{\btheta}_1(\mathcal{B}_1(\mathcal{V}_1);\bu_{M_1}^r)$.
	\STATE Calculate test statistic $t_n^r(M_1):=t_n^r(\hat{\btheta}_1(\mathcal{B}_1(\mathcal{V}_1);\bu_{M_1}^r))$ based on data $\bu_{M_1}^r$ assuming the vine copula model $M_1=RV(\mathcal{V}_1,\mathcal{B}_1(\mathcal{V}_1),\hat{\btheta}_1(\mathcal{B}_1(\mathcal{V}_1)))$.
	\STATE Generate copula data $\bu_{M_2}^r=(\bu_{M_2}^{1r},\ldots,\bu_{M_2}^{dr})$ from $M_2$ of size $n$.
	\STATE Estimate $\btheta_1(\mathcal{B}_1(\mathcal{V}_1))$ of model $M_1$ given data $\bu_{M_2}^r$ and denote it by\\ $\hat{\btheta}_1(\mathcal{B}_1(\mathcal{V}_1);\bu_{M_2}^r)$.
	\STATE Calculate test statistic $t_n^r(M_2):=t_n^r(\hat{\btheta}_1(\mathcal{B}_1(\mathcal{V}_1);\bu_{M_2}^r))$ based on data $\bu_{M_2}^r$ assuming vine copula model $M_1$.
\ENDFOR
\STATE Estimate p-values $p^j_{M_1}$ and $p^j_{M_2}$ by \\
 $\hat{p}^j_{M_1} = \hat{p}(\calt^j_n(M_1)) := \frac{1}{B}\sum_{r=1}^B \boldsymbol{1}_{\{\calt^r_n(M_1)\geq \calt^j_n(M_1)\}}$\quad and\\
	$\hat{p}^j_{M_2} = \hat{p}(\calt^j_n(M_2)) := \frac{1}{B}\sum_{r=1}^B \boldsymbol{1}_{\{\calt^r_n(M_2)\geq \calt^j_n(M_2)\}},$\quad respectively, for $j=1,\ldots,B$.
\STATE Estimate the distribution function of $Z_{M_1}$ and $Z_{M_2}$ by \\ 
$
		\hat{F}_{M_1}(\alpha) := \frac{1}{B}\sum_{r=1}^B \boldsymbol{1}_{\{\hat{p}^r_{M_1}\leq \alpha\}} \quad\text{and}\quad
		\hat{F}_{M_2}(\alpha) := \frac{1}{B}\sum_{r=1}^B \boldsymbol{1}_{\{\hat{p}^r_{M_2}\leq \alpha\}},
$\\
	respectively, giving size and power.
\end{algorithmic}
\

In all of the forthcoming simulation studies we used $B=2500$ replications and the number of observations were chosen to be $n=500, n=750, n=1000$ or $n=2000$. As model dimension we chose $d=5$ and $d=8$ and the critical level $\alpha$ is $0.05$. Possible pair-copula families in the investigated vine copula models are the elliptical Gauss and Student's t-copula, the Archimedean Clayton, Gumbel, Frank and Joe copula, and their rotated versions.
Further, all calculations are performed using the statistical software R\footnote{R Development Core Team (2012). R: A language and environment for statistical computing. R Foundation for Statistical Computing, Vienna, Austria. ISBN 3-900051-07-0, URL http://www.R-project.org/.} and the R-package \texttt{VineCopula} of \citet{VineCopula}.

\subsection{Test specification}
\label{subsec:testSpecification}

In our power study we investigate the size and power behavior of the proposed GOF tests given an R-vine as true model ($M_1$) with respect to the alternatives
\begin{itemize}
	\item multivariate Gauss copula,
	\item C-vine copula and
	\item D-vine copula.
\end{itemize}
Details on the R-vine structure (Figure \ref{fig:5dimRvine}), the chosen pair-copula families and copula parameters for $d=5$ are given in Table \ref{tab:5dimRvine} in \ref{appendix:powerstudy}. For the 8 dimensional example we refer to Table \ref{tab:8dimRvine} of \ref{appendix:powerstudy}. 

The estimated C- and D-vine structures ($\hat{\mathcal{V}}_C$ and $\hat{\mathcal{V}}_D$) are given in Equation (\ref{eq:5dimCvine}) and (\ref{eq:5dimDvine}) of \ref{appendix:powerstudy}, respectively. The structure selection of the D-vine copula is facilitated by solving a traveling salesman algorithm while the root order of the C-vine model follows the heuristic of \citet{CzadoSchepsmeierMin2011}. The assignment of the pair-copula families in the C- and D-vine uses AIC as suggested and validated in \citet{BrechmannDA}. The last alternative copula model is the multivariate Gauss copula, which can be formulated as a vine copula as well \citep[see][]{Czado}. In the Gaussian case the conditional correlation parameters, which form the pair-copula parameters, are equal to the partial correlation parameters. They can be calculated recursively using the entries of the multivariate Gauss copula variance-covariance matrix.

Although all three stated alternatives have different vine structures and pair-copula families we do not know which vine copula model is ``closer'' to the true R-vine model. A often proposed approach for model comparison is the \citet{KullbackLeibler1951} information criterion (KLIC). It measures the distance between a true unknown distribution and a specified, but estimated model. In the following definition we follow \citet{Vuong}. Let $c_0(\cdot)$ be the true (vine) copula density function of a $d$-dimensional random vector $\bU$. Further, $E_0$ denotes the expected value with respect to this true distribution. The estimated (vine) copula density of $\bU$ is denoted as $c(\hat{\btheta}_n|\bU)$, where $\hat{\btheta}_n$ is the estimated model parameter (vector) given $n$ samples of $\bU$. Then, the KLIC between $c_0$ and $c$ is defined as
\[
	KLIC(c_0,c):=\int_{(0,1)^d} c_0(\bu)\ln\left(\frac{c_0(\bu)}{c(\hat{\btheta}_n|\bu)}\right) d\bu = E_0[\ln c_0(\bU)] - E_0[\ln c(\hat{\btheta}_n|\bU)].
\]
The model with the smallest KLIC is ``closest'' to the true model. In the plots of the following power study we ordered the alternatives on the x-axis by their KLIC as listed in Table \ref{tab:KullbackLeiblerDistances}, e.g.~for $d=5$ we have the order D-vine, C-vine, Gauss.

The approximation of the multidimensional integral is facilitated by Monte Carlo or a numerical integration based on the R package \texttt{cubature} \citep{cubature}. In the numerical integration copula data, i.e. $\bu\in(0,1)^d$, or standard normal transformed data, i.e. $\bx=\Phi(\bu)\in\bbr^d$, are used.
We see that it is quite challenging to estimate the KLIC distance in high dimensions.

\begin{table}[htbp]
	\centering
	\begin{minipage}{\linewidth}
      \renewcommand{\footnoterule}{}
      \renewcommand{\thefootnote}{\alph{footnote}} 
		\begin{tabular}{rllll}
		\toprule
			d & method & C-vine & D-vine & Gauss \\
		\midrule
		5 & Monte Carlo & 0.65 & 0.64 & 0.72 \\
			& numerical integration based on copula margins & 0.62\footnotemark[1] & 0.45\footnotemark[1] & 0.71\footnotemark[1] \\
			& numerical integration based on normal margins & 0.48\footnotemark[1] & 0.51\footnotemark[1] & 0.50\footnotemark[1] \\
		8 & Monte Carlo & 1.66 & 0.13 & 0.73 \\
			& numerical integration based on copula margins & 1.46\footnotemark[2] & 1.29\footnotemark[2] & 1.91\footnotemark[2] \\
			& numerical integration based on normal margins & 2.15\footnotemark[3] & 3.20\footnotemark[3] & 2.14\footnotemark[3] \\
		\bottomrule	
		\end{tabular}
	\caption{Kullback-Leibler distances of the proposed vine copula models with respect to the true R-vine copula model (${}^{a}$estimated relative error $<0.01$, ${}^{b}$estimated relative error $\approx 1.4$, ${}^{c}$estimated relative error $\approx 3.5$). }
	\label{tab:KullbackLeiblerDistances}
	\end{minipage} 
\end{table}

\subsection{Results}
\label{subsec:numericalResults}

\begin{figure}[htbp]
	\centering
	\includegraphics[width=0.9\textwidth]{legend.pdf}
	\subfigure[$d=5, n=500$]{\includegraphics[width=0.48\textwidth]{powerstudy_all_new.pdf}}
	\subfigure[$d=5, n=750$]{\includegraphics[width=0.48\textwidth]{powerstudy_all_new_750.pdf}}\\
	\subfigure[$d=5, n=1000$]{\includegraphics[width=0.48\textwidth]{powerstudy_all_new_1000.pdf}}
	\subfigure[$d=5, n=2000$]{\includegraphics[width=0.48\textwidth]{powerstudy_all_new_2000.pdf}}
	\caption{Power comparison of the proposed goodness-of-fit tests in 5 dimensions with different number of sample points. The alternatives are ordered on the x-axis by the rank of their KLIC value with respect to the true R-vine.}
	\label{fig:powerPlots}
\end{figure}

\begin{figure}[htbp]
	\centering
	\includegraphics[width=0.9\textwidth]{legend.pdf}
	\subfigure[$d=8, n=500$]{\includegraphics[width=0.48\textwidth]{powerstudy_all_new_mixed8d.pdf}}
	\subfigure[$d=8, n=750$]{\includegraphics[width=0.48\textwidth]{powerstudy_all_new_mixed8d_750.pdf}}
	\subfigure[$d=8, n=1000$]{\includegraphics[width=0.48\textwidth]{powerstudy_all_new_mixed8d_1000.pdf}}
	\subfigure[$d=8, n=2000$]{\includegraphics[width=0.48\textwidth]{powerstudy_all_new_mixed8d_2000.pdf}}
	\caption{Power comparison of the proposed goodness-of-fit tests in 8 dimensions with different number of sample points. The alternatives are ordered on the x-axis by the rank of their KLIC value with respect to the true R-vine.}
	\label{fig:powerPlots8d}
\end{figure}

Since all proposed GOF tests have either no asymptotic distribution at all or face substantial numerical problems estimating the asymptotic variance or have shown to have low power in small samples, we only investigate the bootstrapped version of the tests. In the Figures \ref{fig:powerPlots} and \ref{fig:powerPlots8d} we illustrate the estimated power of all 15 proposed GOF tests for $d=5$ and $d=8$, respectively. On the x-axis we have the R-vine as true model and the three alternatives ordered by their KLIC. For the true model the actual size is plotted. A horizontal black dashed line indicates the 5\% $\alpha$-level.\\

\textbf{Size:}
All proposed GOF tests maintain their given size independently of the number of sample points for $d=5$.
In the 8-dimensional case the GOF tests based on the Berg approaches do not maintain their nominal size in case of $n=500$ and $n=750$. All other GOF tests do hold the $5\%$ level and thus control the type I error. 

\textbf{Sample size effects on power:}
	We have increasing power with increasing sample size for the White, IR, ECP, ECP2 and Breymann (in combination with the AD test statistic) GOF test.
		The tests based on Berg and Berg2 have no or very low power independently of the number of observations. This is also true for the Breymann GOF test in combination with the univariate CvM and KS test statistics.
		In eight dimensions the number of sample points are important for the IR test since the tests has very small power considering only 500 data points. In five dimensions the effect is not that eye-catching but can be found too.
		Almost independent from the the number of sample points is the ordering of the test by their power. In all test scenarios the ECP2 test with mCvM test statistic outperforms the others, followed by the IR test, the test based on White and the ECP2 test based on the mKS test statistic. The next GOF tests are the tests based on the ECP and the Breymann transformation with AD test statistic.

\textbf{Dimension effect on the power:}
		The power of the top four GOF tests (IR, White, ECP and ECP2) are almost independent of the dimension. Only in the case of $n=500$ sample points a clearly increase of power can be observed from $d=5$ to $d=8$ dimensions.
		For the weaker tests the reverse is true. With increasing dimension the Breymann GOF test decreases in power. The Berg and Berg2 tests are independently of the dimension. 

\textbf{Effect of alternatives on the power:}
	The results with respect to the KLIC are two-fold.
		For $d=5$ the power increases with increasing KLIC for the most GOF tests except for the Gauss copula in $H_1$. 
		For $d=8$ it is again the multivariate Gauss copula which is out of line for many of the tests. The exceptions are the ECP tests.
		For $n\geq 1000$ the power of the four ``good'' tests mentioned before increases with KLIC. Some of them have even a power of 100\%.
The Breymann test is conspicuous, since the test 
is working quite well for the C- and D-vine alternative but is relatively poor for the multivariate Gaussian copula independent of the dimension or sample size. While the Breymann tests have much lower power than the four best GOF tests, they still have power to distinguish between the null and alternative models. 

\textbf{Effects of the test functionals on power:}
For ECP, ECP2 or Breymann tests it appears that CvM based test statistics are more powerful than the KS type test statistics. This is in line with \citet{Genest2009} for bivariate copula GOF tests.

The poor performance of the Breymann, Berg and Berg2 approach was also recognized in the comparison studies of \citet{Genest2009} in the bivariate case and in \citet{Berg} for copulas of dimension 2, 4 and 8. The analyzed copulas in \citet{Berg} were the Gauss, Student's t, Clayton, Gumbel and Frank copula. But there the test statistics maintained their nominal level and had some explanatory power.

The bootstrapped p-values or power values stabilize fast for increasing bootstrap replications, for all GOF tests. This happens for 1000-1500 replications, irrespective of sample size or alternative. In many cases the stabilization is even faster.

Beside these last points, no clear hierarchy among the best performing proposed test statistics is recognizable. But some tests perform rather well while others do not even maintain their nominal level. In particular, our new IR test performs quite well in terms of power against false alternatives.

Of cause the computation time for the different proposed GOF tests is also a point of interest for practical applications. Therefore, in Table \ref{tab:Summary} the computation times in seconds for the different methods run on a Intel(R) Core(TM) i5-2450M CPU @ 2.50GHz computer for $n=1000$ are given alongside with a summary of our findings. The computing time of the information matrix based methods White and IR are clearly higher than the other test statistics. Given the complex calculation of the R-vine gradient and Hessian matrix \citep[see][]{StoeberSchepsmeier2012} this is not very surprising.



\begin{sidewaystable}[htbp]
	\centering
		\begin{tabular}{lccccccc}
		& & & & & & & \\
			\toprule
			& White & Breymann & Berg & Berg2 & ECP & ECP2 & IR \\
			\midrule
			\textbf{main idea} & \footnotesize$\bbh(\btheta)+\bbc(\btheta)=0$ & \multicolumn{3}{c}{PIT+Aggregation+uniform test} 
			& \footnotesize$\hat{C}_n-C_{\hat{\btheta}_n}$ & PIT+ECP &  \footnotesize$-\bbh(\btheta)^{-1}\bbc(\btheta)=I_p$ \\
			\addlinespace
			\textbf{hold nominal level} & + & + & $-$ & $-$ & + & + & + \\
			\addlinespace
			\textbf{power against}  & + & 0 & $-$ & $-$ & 0 & + & + \\
			\textbf{\ \ alternatives} & & (partly) & & & (partly) & & \\
			\addlinespace
			\textbf{consistentency} & + & $-$ & $-$ & $-$ & $-$ & + & + \\
			\addlinespace
			\textbf{asymptotic} & + & 0 & $-$ & $-$ & $-$ & $-$ & + \\
			\textbf{\ \ distribution} & (only for high n) & (proved to be  & & & & & (not tested) \\
			& & incorrect) & & & & & \\
			\addlinespace
			\textbf{complexity} & 0 & $-$ & $-$ & $-$ & + & 0 & + \\
			& (difficult cov. & \multicolumn{3}{c}{(3 step procedure)} & & (2 step & \\
			& matrix) & & & & & procedure) & \\
			\textbf{computation time} & & & & & & \\
			$d=5$ & 3.41 & 0.06 & 0.06 & 0.07 & 0.08 & 0.06 & 1.66 \\
			$d=8$ & 58.79 & 0.14 & 0.14 & 0.18 & 0.17 & 0.13 & 30.30 \\
			\bottomrule
		\end{tabular}
	\caption{Overview of the performance of the proposed GOF tests.}
	\label{tab:Summary}
\end{sidewaystable}


\section{Application}
\label{sec:application}

As application we consider a financial data set of four indices and their corresponding volatility indices, namely the German DAX and VDAX-NEW, the European EuroSTOXX50 and VSTOXX, the US S\&P500 and VIX, and the Swiss SMI and VSMI. The daily data cover the time horizon of the current financial crisis starting at August, 9th, 2007 when a sharp increase of inter bank interest rates was noticed, until April 30th, 2013, resulting in 1405 data points. For each marginal time series we calculated the log-returns and modeled them with an AR(1)-GARCH(1,1) model using Student's t innovations. The resulting standardized residuals are transformed using the non-parametric rank transformation \citep[see][]{Genest1995} to obtain $[0,1]^8$ copula data.

The contour and pair plots in Figure \ref{fig:ContourCrisis} reveal the expected elliptical positive dependence behavior among the indices and among the volatility indices. But between the indices and the volatilities a negative dependence can be observed. Furthermore, a slight asymmetric tail dependence is recognizable.

\begin{figure}[ht!]
	\centering
	\vspace{-1cm}
		\includegraphics[width=0.9\textwidth]{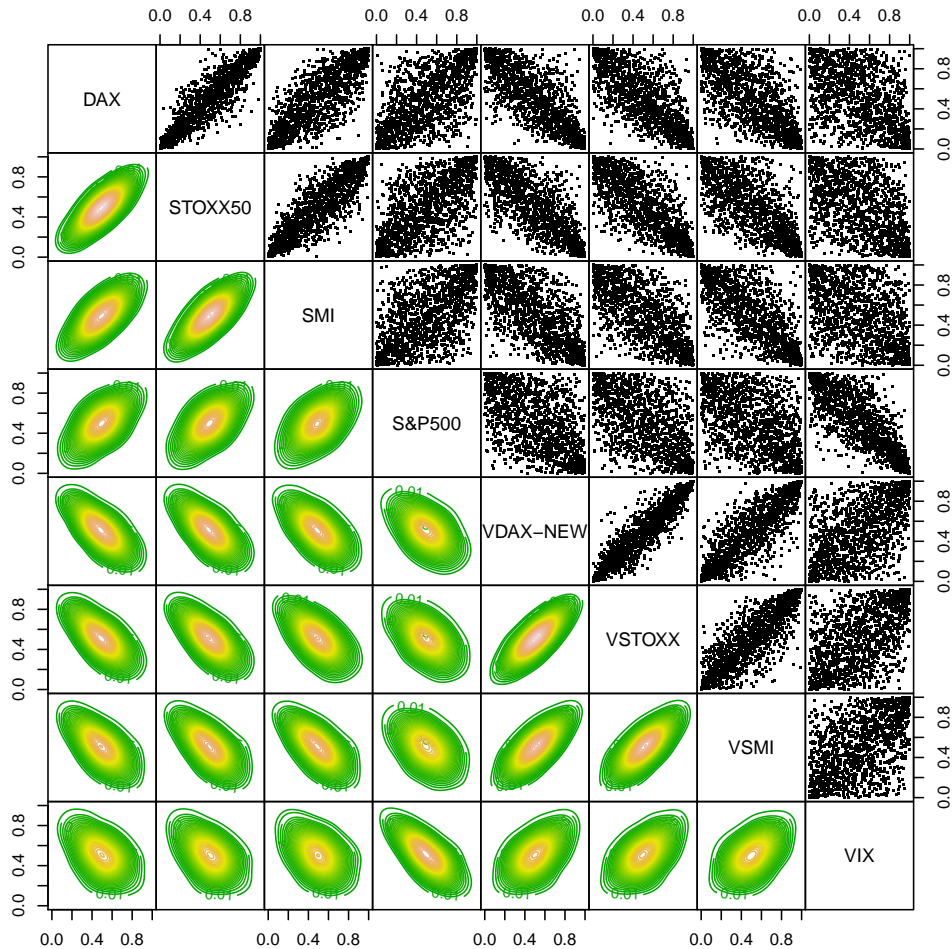}
	\caption{Lower left: Contour plots with standard normal margins; upper right: pairs plots of the transformed data set.}
	\label{fig:ContourCrisis}
\end{figure}

To model the dependence structure we investigated four models. In particular, an R-vine copula model, selected using the maximum spanning tree algorithm by \citet{DissmannBrechmannCzadoKurowicka2011}, a C-vine copula, selected by the heuristic proposed by \citet{CzadoSchepsmeierMin2011}, a D-vine copula, selected using a traveling sales man algorithm, and a multivariate Gaussian copula. The corresponding first trees of the vine models are illustrated in Figure \ref{fig:Tree1Crisis}. For the R-vine copula as well as in the D-vine copula we can see that the indices and the volatilities cluster except for the US ones. The C-vine copula is too restrictive to recognize such groupings. Another interesting point is that the first tree structure of the R-vine is very close to the first tree structure of the D-vine. If we delete the edge ``DAX-VDAX-NEW'' and add a new edge ``VSMI-SMI'' in the R-vine we get the D-vine tree structure. Further, we see evidence of asymmetric tail dependence since (rotated) Gumbel copulas are selected.

\begin{figure}[ht!]
	\centering
		\includegraphics[width=1.00\textwidth]{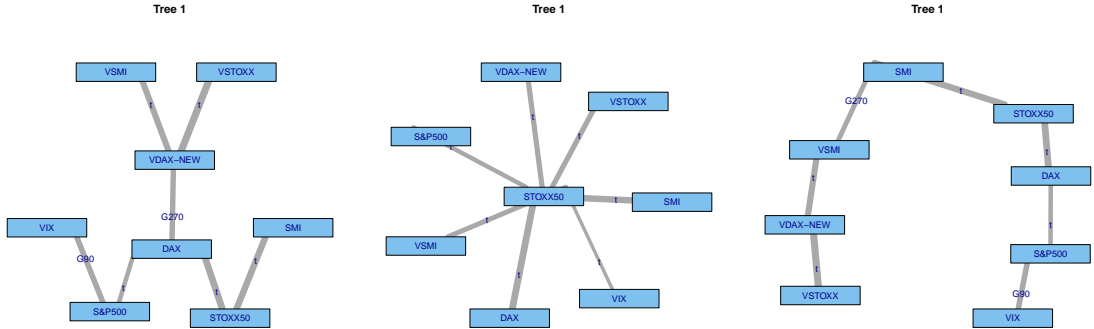}
	\caption{First tree structure of the selected R-vine (left), C-vine (center) and D-vine (right). The edge label denote the corresponding pair-copula family ($t\hat{=}$Student's t; $G90, G270\hat{=}$rotated Gumbel).}
	\label{fig:Tree1Crisis}
\end{figure}

Performing a parametric bootstrap with $B=2500$ most of the good performing proposed GOF tests, namely White, IR, ECP (with CvM) and ECP2 (with CvM), confirm that a vine copula model can not be rejected at a 5\% significance level (see Table \ref{tab:application}). Only the ECP2 approach returns a p-value of 0.01 below the chosen significance level of 0.05 for the estimated C-vine copula, and the White based test a pvalue $<0.01$ for the estimated R-vine copula model. The multivariate Gauss copula is rejected by the White, the IR and ECP2 GOF test, while the ECP based test returns a p-value of 0.6. In 3 of 4 GOF tests the highest returned p-value is for the D-vine copula. But note that the size of the p-value or the ordering of the p-values do not give an ordering of the considered models.

As in the simulation study the GOF tests differ in their rejection decision and several GOF tests are needed to get a better picture of the better fitting model. The discrimination between the estimated vine copula models is even harder than in the power study. The MC-estimated KLIC of the R-vine to the C-vine is only 0.15, while the KLIC of the R-vine to the D-vine is even smaller (0.11). Even the multivariate Gauss copula has an estimated small KLIC with 0.31. Additional simulation studies based on the estimated vine copula models for $n=1000$ show that the simulated power is quite small for all proposed GOF tests. 

In terms of log-likelihood the D-vine is also the best fitting vine copula to the data unless the R-vine has a better AIC and BIC. The significant smaller number of parameters favors the R-vine compared to the D- or C-vine.

The economical interpretation of these findings is, that the assumption of multivariate Gaussian distributed random vectors is not fulfilled in times of financial and economic crises. More flexible models are needed to capture the asymmetric behaviors and tail dependencies. R-vines are able to model these properties as already shown in \citet{BrechmannEuroStoxx2012} or  \citet{StoeberSchepsmeier2012}.

%

\begin{table}[ht]
	\centering
	{\footnotesize
		\begin{tabular}{lrrrrrrrrrr}
			\toprule
			& log-lik & \#par & AIC & BIC  & White & \multicolumn{2}{c}{ECP} & \multicolumn{2}{c}{ECP2} & IR \\
			\cmidrule{7-10}
			& & & & & & CvM & KS & CvM & KS & \\
			\midrule
			R-vine & 7652 & 33 & \textbf{-15238} & \textbf{-15065}  & 0.002 & 0.18 & 0.98 & 0.30 & 0.67 & 0.75 \\
			C-vine & 7585 & 42 & -15086 & -14865 & 0.14 & 0.51 & 0.36 & 0.01 & $<0.01$ & 0.74 \\
			D-vine & \textbf{7654} & 41 & -15226 & -15011 & 0.41 & 0.82 & 0.24 & 0.55 & 0.67 & 0.52 \\
			Gauss  & 7320 & 28 & -14584 & -14445 & $<0.01$ & 0.60 & 0.28 & $<0.01$ & $<0.01$ & $<0.01$ \\
			\bottomrule
		\end{tabular}}
	\caption{Likelihood based validation quantities and bootstrapped p-values of the White, ECP, ECP2 and IR goodness-of-fit test for the 4 considered (vine) copula models}
	\label{tab:application}
\end{table}

\section{Discussion}
\label{sec:discussion}

We introduced a new GOF test for regular vine copula models based on the information matrix ratio. The calculation of the test statistic as well as its asymptotic distribution function showed up to be challenging. But good empirical approximations have been found as shown in an extensive power study. The study revealed good performance of the test in terms of power against false alternatives given simulated p-values. Given sufficient data points the test is even consistent. Furthermore, the new GOF test maintained always its nominal level, controlling the type I error, independently of sample size, dimension or alternative.

Since only \citet{Schepsmeier2013} investigated a GOF test for R-vines so far, further GOF tests extended from the (bivariate) copula case are introduced to facilitate a wider comparison. In particular, 14 other GOF tests were explained and compared in a multi-dimensional setting. The small sample performance for size and power were analyzed for GOF tests based on the difference of the Bartlett identity, the empirical copula process and the multivariate PIT. This paper gives the first comparison study and review of vine copula GOF tests. The new IR test as well as the White test introduced by \citet{Schepsmeier2013} and the ECP2 based tests of \citet{Genest2009} performed very well. They outperformed the tests based on the multivariate PIT. The PIT based tests revealed little power against the considered alternatives. In particular, the new IR GOF test performed best in several cases. Thus, the proposed GOF tests enable statisticians to conduct efficient model diagnostics using hypothesis tests in high dimensional settings.

Of cause further GOF tests already known for copulas can be extended to the vine copula case. But most of them will have crucial problems in higher dimensions. For example the likelihood ratio based GOF test or the Chi-squared type GOF test, both introduced for copulas in \citet{DobricSchmid2005}, have to partition the unit hypercube. This will probably result in long computation time in high dimensions as well as the need of sufficient large number of observations. The Kendall's process based GOF tests suggested by \citet{BergAas2009} need like the ECP based GOF tests a double bootstrap procedure since the Kendall's process is not trackable for the vine copula. But this approach revealed good results in the comparison study of \citet{Genest2009} for bivariate copulas. Further suggestions for copula GOF tests are for example presented in \citet{Fermanian2012}.

A very interesting hybrid approach was suggested by \citet{PeterOstap2013}. Since no GOF test outperforms in all cases a hybrid test is suggested. Given $m$ test statistics $t_n^{(i)}$ with sample size $n$ and controlling type I error for any given significance level $\alpha$ under the null hypothesis the hybrid p-value is defined as
\[
	p_n^{hybrid} := m*\min\{p_n^{(1)},\ldots,p_n^{(m)}\}.
\]
Here $p_n^{(i)}, i=1,\ldots, m$ denote the p-values of the test statistics $t_n^{(i)}$. They showed that the power function is bounded from below and if there is at least one test which is consistent, then the hybrid test is consistent. An extension to the vine copula case would be highly welcomed.

By testing the validity of the null hypothesis $H_0: C\in \calc_0$, where $C$ denotes the (vine) copula distribution function and $\calc_0$ is a class of parametric copulas one has to take the margins into account. 
As pointed out by \citet{Genest2009} the marginal distribution functions $F_1,\ldots,F_d$ of the random variables $X_1,\ldots,X_d$ can be considered as nuisance parameters. So far we always considered known margins. Thus an extension of the proposed GOF tests to unknown margins has to be considered in the future.

\section*{Acknowledgments}
The author acknowledge substantial contributions by his colleagues of the research group of Prof. C. Czado at Technische Universit\"at M\"unchen and the support of the TUM Graduate School's International School of Applied Mathematics. A special thanks goes to Peter Song for fruitful discussions. Numerical calculations were performed on a Linux cluster supported by DFG grant INST 95/919-1 FUGG.

\bibliographystyle{model2-names}
\bibliography{references}

\begin{thebibliography}{37}
\expandafter\ifx\csname natexlab\endcsname\relax\def\natexlab#1{#1}\fi
\expandafter\ifx\csname url\endcsname\relax
  \def\url#1{\texttt{#1}}\fi
\expandafter\ifx\csname urlprefix\endcsname\relax\def\urlprefix{URL }\fi
\providecommand{\eprint}[2][]{\url{#2}}
\providecommand{\bibinfo}[2]{#2}
\ifx\xfnm\relax \def\xfnm[#1]{\unskip,\space#1}\fi
\bibitem[{Aas et~al.(2009)Aas, Czado, Frigessi and Bakken}]{Aas_Czado}
\bibinfo{author}{Aas, K.}, \bibinfo{author}{Czado, C.},
  \bibinfo{author}{Frigessi, A.}, \bibinfo{author}{Bakken, H.},
  \bibinfo{year}{2009}.
\newblock \bibinfo{title}{Pair-copula construction of multiple dependence}.
\newblock \bibinfo{journal}{Insurance: Mathematics and Economics}
  \bibinfo{volume}{44}, \bibinfo{pages}{182--198}.
\bibitem[{Anderson and Darling(1954)}]{AndersonDarling1954}
\bibinfo{author}{Anderson, T.W.}, \bibinfo{author}{Darling, D.A.},
  \bibinfo{year}{1954}.
\newblock \bibinfo{title}{A test of goodness of fit}.
\newblock \bibinfo{journal}{Journal of the American Statistical Association}
  \bibinfo{volume}{49}, \bibinfo{pages}{765--769}.
\bibitem[{Bedford and Cooke(2001)}]{Bedford_Cooke2001}
\bibinfo{author}{Bedford, T.}, \bibinfo{author}{Cooke, R.},
  \bibinfo{year}{2001}.
\newblock \bibinfo{title}{Probability density decomposition for conditionally
  dependent random variables modeled by vines.}
\newblock \bibinfo{journal}{Ann. Math. Artif. Intell.} \bibinfo{volume}{32},
  \bibinfo{pages}{245--268}.
\bibitem[{Bedford and Cooke(2002)}]{Bedford_Cooke}
\bibinfo{author}{Bedford, T.}, \bibinfo{author}{Cooke, R.},
  \bibinfo{year}{2002}.
\newblock \bibinfo{title}{Vines - a new graphical model for dependent random
  variables.}
\newblock \bibinfo{journal}{Annals of Statistics} \bibinfo{volume}{30},
  \bibinfo{pages}{1031--1068}.
\bibitem[{Berg(2009)}]{Berg}
\bibinfo{author}{Berg, D.}, \bibinfo{year}{2009}.
\newblock \bibinfo{title}{Copula goodness-of-fit testing: An overview and power
  comparison}.
\newblock \bibinfo{journal}{The European Journal of Finance}
  \bibinfo{volume}{15}, \bibinfo{pages}{1466--4364}.
\bibitem[{Berg and Aas(2009)}]{BergAas2009}
\bibinfo{author}{Berg, D.}, \bibinfo{author}{Aas, K.}, \bibinfo{year}{2009}.
\newblock \bibinfo{title}{Models for construction of multivariate dependence: A
  comparison study}.
\newblock \bibinfo{journal}{The European Journal of Finance}
  \bibinfo{volume}{15}, \bibinfo{pages}{639--659}.
\bibitem[{Berg and Bakken(2007)}]{BergBakken2007}
\bibinfo{author}{Berg, D.}, \bibinfo{author}{Bakken, H.}, \bibinfo{year}{2007}.
\newblock \bibinfo{title}{A copula goodness-of-fit approach based on the
  conditional probability integral transformation}.
\newblock \bibinfo{note}{Http://www.danielberg.no/publications/Btest.pdf}.
\bibitem[{Bickel and Doksum(2007)}]{Bickel2007}
\bibinfo{author}{Bickel, P.J.}, \bibinfo{author}{Doksum, K.A.},
  \bibinfo{year}{2007}.
\newblock \bibinfo{title}{Mathematical Statistics: Basic Ideas and selected
  Topics}. volume~\bibinfo{volume}{1}.
\newblock \bibinfo{publisher}{Pearson Prentice Hall, Upper Saddle River}.
  \bibinfo{edition}{second} edition.
\bibitem[{Brechmann(2010)}]{BrechmannDA}
\bibinfo{author}{Brechmann, E.}, \bibinfo{year}{2010}.
\newblock \bibinfo{title}{{Truncated and simplified regular vines and their
  applications}}.
\newblock \bibinfo{type}{Diploma thesis}. Center of Mathematical Sciences,
  Munich University of Technology. \bibinfo{address}{Garching bei M{\"u}nchen}.
\bibitem[{Brechmann and Czado(2013)}]{BrechmannEuroStoxx2012}
\bibinfo{author}{Brechmann, E.C.}, \bibinfo{author}{Czado, C.},
  \bibinfo{year}{2013}.
\newblock \bibinfo{title}{Risk management with high-dimensional vine copulas:
  {A}n analysis of the {E}uro {S}toxx 50.}
\newblock \bibinfo{journal}{Statistics \& Risk Modeling} ,
  \bibinfo{pages}{forthcoming}.
\bibitem[{Breymann et~al.(2003)Breymann, Dias and Embrechts}]{Breymann2003}
\bibinfo{author}{Breymann, W.}, \bibinfo{author}{Dias, A.},
  \bibinfo{author}{Embrechts, P.}, \bibinfo{year}{2003}.
\newblock \bibinfo{title}{Dependence structures for multivariate high-frequency
  data in finance}.
\newblock \bibinfo{journal}{Quantitative Finance} \bibinfo{volume}{3},
  \bibinfo{pages}{1--14}.
\bibitem[{{C code by Steven G. Johnson and R by Balasubramanian
  Narasimhan}(2011)}]{cubature}
\bibinfo{author}{{C code by Steven G. Johnson and R by Balasubramanian
  Narasimhan}}, \bibinfo{year}{2011}.
\newblock \bibinfo{title}{cubature: Adaptive multivariate integration over
  hypercubes}.
\newblock \bibinfo{note}{R package version 1.1-1}.
\bibitem[{Czado(2010)}]{Czado}
\bibinfo{author}{Czado, C.}, \bibinfo{year}{2010}.
\newblock \bibinfo{title}{{Pair-Copula Constructions of Multivariate Copulas}},
  in: \bibinfo{editor}{{Jaworski, P. and Durante, F. and H\"ardle, W.K. and
  Rychlik, T}} (Ed.), \bibinfo{booktitle}{{ Copula Theory and Its Applications,
  Lecture Notes in Statistics}}, \bibinfo{publisher}{Springer-Verlag},
  \bibinfo{address}{Berlin Heidelberg}. pp. \bibinfo{pages}{93--109}.
\bibitem[{Czado et~al.(2012)Czado, Schepsmeier and
  Min}]{CzadoSchepsmeierMin2011}
\bibinfo{author}{Czado, C.}, \bibinfo{author}{Schepsmeier, U.},
  \bibinfo{author}{Min, A.}, \bibinfo{year}{2012}.
\newblock \bibinfo{title}{Maximum likelihood estimation of mixed {C}-vines with
  application to exchange rates}.
\newblock \bibinfo{journal}{Statistical Modelling} \bibinfo{volume}{12},
  \bibinfo{pages}{229--255}.
\bibitem[{David(1981)}]{David1981}
\bibinfo{author}{David, H.}, \bibinfo{year}{1981}.
\newblock \bibinfo{title}{Order statistics}.
\newblock Wiley series in probability and mathematical statistics. Applied
  probability and statistics, \bibinfo{publisher}{Wiley}.
\bibitem[{Deheuvels(1984)}]{Deheuvels1984}
\bibinfo{author}{Deheuvels, P.}, \bibinfo{year}{1984}.
\newblock \bibinfo{title}{The characterization of distributions by order
  statistics and record values: A unified approach}.
\newblock \bibinfo{journal}{Journal of Applied Probability}
  \bibinfo{volume}{21}, \bibinfo{pages}{326--334}.
\bibitem[{Di\ss{}mann et~al.(2013)Di\ss{}mann, Brechmann, Czado and
  Kurowicka}]{DissmannBrechmannCzadoKurowicka2011}
\bibinfo{author}{Di\ss{}mann, J.}, \bibinfo{author}{Brechmann, E.},
  \bibinfo{author}{Czado, C.}, \bibinfo{author}{Kurowicka, D.},
  \bibinfo{year}{2013}.
\newblock \bibinfo{title}{Selecting and estimating regular vine copulae and
  application to financial returns}.
\newblock \bibinfo{journal}{Computational Statistics and Data Analysis}
  \bibinfo{volume}{59}, \bibinfo{pages}{52 -- 69}.
\bibitem[{Dobri{\'c} and Schmid(2005)}]{DobricSchmid2005}
\bibinfo{author}{Dobri{\'c}, J.}, \bibinfo{author}{Schmid, F.},
  \bibinfo{year}{2005}.
\newblock \bibinfo{title}{Testing goodness of fit for parametric families of
  copulas - application to financial data}.
\newblock \bibinfo{journal}{Communications in Statistics - Simulation and
  Computation} \bibinfo{volume}{34}, \bibinfo{pages}{1053--1068}.
\bibitem[{{Fermanian}(2012)}]{Fermanian2012}
\bibinfo{author}{{Fermanian}, J.D.}, \bibinfo{year}{2012}.
\newblock \bibinfo{title}{{An overview of the goodness-of-fit test problem for
  copulas}}.
\newblock \bibinfo{journal}{ArXiv e-prints} \eprint{1211.4416}.
\bibitem[{Genest et~al.(1995)Genest, Ghoudi and Rivest}]{Genest1995}
\bibinfo{author}{Genest, C.}, \bibinfo{author}{Ghoudi, K.},
  \bibinfo{author}{Rivest, L.P.}, \bibinfo{year}{1995}.
\newblock \bibinfo{title}{A semiparametric estimation procedure of dependence
  parameters in multivariate families of distributions}.
\newblock \bibinfo{journal}{Biometrika} \bibinfo{volume}{82},
  \bibinfo{pages}{543--552}.
\bibitem[{Genest et~al.(2006)Genest, Quessy and
  R{\'e}millard}]{Genest_Remillard_2}
\bibinfo{author}{Genest, C.}, \bibinfo{author}{Quessy, J.F.},
  \bibinfo{author}{R{\'e}millard, B.}, \bibinfo{year}{2006}.
\newblock \bibinfo{title}{{Goodness-of-fit Procedures for Copula Model Based on
  the Probability Integral Transformation}}.
\newblock \bibinfo{journal}{{Scandinavian Journal of Statistics}}
  \bibinfo{volume}{33}, \bibinfo{pages}{337--366}.
\bibitem[{Genest and R\'{e}millard(2008)}]{GenestRemillard2007}
\bibinfo{author}{Genest, C.}, \bibinfo{author}{R\'{e}millard, B.},
  \bibinfo{year}{2008}.
\newblock \bibinfo{title}{{Validity of the parametric bootstrap for
  goodness-of-fit testing in semiparametric models}}.
\newblock \bibinfo{journal}{Annales de l'Institut Henri Poincar\'{e} -
  Probabilit\'{e}s et Statistiques} \bibinfo{volume}{44},
  \bibinfo{pages}{1096--1127}.
\bibitem[{Genest et~al.(2009)Genest, R{\'e}millard and Beaudoin}]{Genest2009}
\bibinfo{author}{Genest, C.}, \bibinfo{author}{R{\'e}millard, B.},
  \bibinfo{author}{Beaudoin, D.}, \bibinfo{year}{2009}.
\newblock \bibinfo{title}{Goodness-of-fit tests for copulas: a review and power
  study}.
\newblock \bibinfo{journal}{Insur. Math. Econ.} \bibinfo{volume}{44},
  \bibinfo{pages}{199--213}.
\bibitem[{Huang and Prokhorov(2011)}]{HuangProkhorov2011}
\bibinfo{author}{Huang, W.}, \bibinfo{author}{Prokhorov, A.},
  \bibinfo{year}{2011}.
\newblock \bibinfo{title}{A goodness-of-fit test for copulas}.
\newblock \bibinfo{note}{Submitted to Economic Reviews}.
\bibitem[{Joe(1996)}]{Joe2}
\bibinfo{author}{Joe, H.}, \bibinfo{year}{1996}.
\newblock \bibinfo{title}{Families of m-variate distributions with given
  margins and m(m-1)/2 bivariate dependence parameters}, in:
  \bibinfo{editor}{{L. R{\"{u}}schendorf and B. Schweizer and M. D. Taylor}}
  (Ed.), \bibinfo{booktitle}{Distributions with Fixed Marginals and Related
  Topics}, \bibinfo{publisher}{Inst. Math. Statist.},
  \bibinfo{address}{Hayward, CA}. pp. \bibinfo{pages}{120--141}.
\bibitem[{Kullback and Leibler(1951)}]{KullbackLeibler1951}
\bibinfo{author}{Kullback, S.}, \bibinfo{author}{Leibler, R.A.},
  \bibinfo{year}{1951}.
\newblock \bibinfo{title}{{On Information and Sufficiency}}.
\newblock \bibinfo{journal}{The Annals of Mathematical Statistics}
  \bibinfo{volume}{22}, \bibinfo{pages}{79--86}.
\bibitem[{Morales-N{\'a}poles et~al.(2010)Morales-N{\'a}poles, Cooke and
  Kurowicka}]{MoralesNapolesCookeKurowicka2009}
\bibinfo{author}{Morales-N{\'a}poles, O.}, \bibinfo{author}{Cooke, R.M.},
  \bibinfo{author}{Kurowicka, D.}, \bibinfo{year}{2010}.
\newblock \bibinfo{title}{About the number of vines and regular vines on n
  nodes}.
\newblock \bibinfo{journal}{preprint, personal communication} .
\bibitem[{Presnell and Boos(2004)}]{PresnellBoos2004}
\bibinfo{author}{Presnell, B.}, \bibinfo{author}{Boos, D.D.},
  \bibinfo{year}{2004}.
\newblock \bibinfo{title}{The ios test for model misspecification}.
\newblock \bibinfo{journal}{Journal of the American Statistical Association}
  \bibinfo{volume}{99}, \bibinfo{pages}{216--227}.
\bibitem[{Rosenblatt(1952)}]{Rosenblatt1952}
\bibinfo{author}{Rosenblatt, M.}, \bibinfo{year}{1952}.
\newblock \bibinfo{title}{{Remarks on a Multivariate Transformation}}.
\newblock \bibinfo{journal}{The Annals of Mathematical Statistics}
  \bibinfo{volume}{23}, \bibinfo{pages}{470--472}.
\bibitem[{Schepsmeier(2013)}]{Schepsmeier2013}
\bibinfo{author}{Schepsmeier, U.}, \bibinfo{year}{2013}.
\newblock \bibinfo{title}{A goodness-of-fit test for regular vine copula
  models}.
\newblock \bibinfo{note}{Http://arxiv.org/abs/1306.0818}.
\bibitem[{Schepsmeier et~al.(2012)Schepsmeier, Stoeber and
  Brechmann}]{VineCopula}
\bibinfo{author}{Schepsmeier, U.}, \bibinfo{author}{Stoeber, J.},
  \bibinfo{author}{Brechmann, E.C.}, \bibinfo{year}{2012}.
\newblock \bibinfo{title}{VineCopula: Statistical inference of vine copulas}.
\newblock \bibinfo{note}{R package version 1.0}.
\bibitem[{Sklar(1959)}]{Sklar}
\bibinfo{author}{Sklar, M.}, \bibinfo{year}{1959}.
\newblock \bibinfo{title}{{Fonctions de r{\'e}partition {\`a} n dimensions et
  leurs marges}}.
\newblock \bibinfo{journal}{Publ. Inst. Statist. Univ. Paris}
  \bibinfo{volume}{8}, \bibinfo{pages}{229--231}.
\bibitem[{St\"ober and Schepsmeier(2013)}]{StoeberSchepsmeier2012}
\bibinfo{author}{St\"ober, J.}, \bibinfo{author}{Schepsmeier, U.},
  \bibinfo{year}{2013}.
\newblock \bibinfo{title}{Estimating standard errors in regular vine copula
  models}.
\newblock \bibinfo{journal}{Computational Statistics} , \bibinfo{pages}{1--29}.
\bibitem[{Vuong(1989)}]{Vuong}
\bibinfo{author}{Vuong, Q.}, \bibinfo{year}{1989}.
\newblock \bibinfo{title}{{Likelihood Ratio Tests for Model Selection and
  Non-Nested Hypotheses}}.
\newblock \bibinfo{journal}{Econometrica} \bibinfo{volume}{57},
  \bibinfo{pages}{307--333}.
\bibitem[{White(1982)}]{White1982}
\bibinfo{author}{White, H.}, \bibinfo{year}{1982}.
\newblock \bibinfo{title}{Maximum likelihood estimation of misspecified
  models}.
\newblock \bibinfo{journal}{Econometrica} \bibinfo{volume}{50},
  \bibinfo{pages}{1--26}.
\bibitem[{Zhang et~al.(2013)Zhang, Okhrin, Zhou and Song}]{PeterOstap2013}
\bibinfo{author}{Zhang, S.}, \bibinfo{author}{Okhrin, O.},
  \bibinfo{author}{Zhou, Q.M.}, \bibinfo{author}{Song, P.X.K.},
  \bibinfo{year}{2013}.
\newblock \bibinfo{title}{{Goodness-of-fit Test For Specification of
  Semiparametric Copula Dependence Models}}.
\newblock \bibinfo{note}{Personal communication
  http://sfb649.wiwi.hu-berlin.de/papers/pdf/SFB649DP2013-041.pdf}.
\bibitem[{Zhou et~al.(2012)Zhou, Song and Thompson}]{ZhouSongThompson2012}
\bibinfo{author}{Zhou, Q.M.}, \bibinfo{author}{Song, P.X.K.},
  \bibinfo{author}{Thompson, M.E.}, \bibinfo{year}{2012}.
\newblock \bibinfo{title}{Information ratio test for model misspecification in
  quasi-likelihood inference}.
\newblock \bibinfo{journal}{Journal of the American Statistical Association}
  \bibinfo{volume}{107}, \bibinfo{pages}{205--213}.

\end{thebibliography}

\appendix
\section{Rosenblatt's transform for R-vines}
\label{appendix:rosenblatt}

The multivariate probability integral transformation (PIT) of \citet{Rosenblatt1952} transforms the copula data $\boldsymbol{u}=(u_1,\ldots,u_d)$ with a given multivariate copula $C$ into independent data in $[0,1]^d$, where $d$ is the dimension of the data set. 

\begin{definition}[Rosenblatt's transform]
Let $\boldsymbol{u}=(u_1,\ldots,u_d)$ denote copula data of dimension $d$. Further let $C$ be the joint cdf of $\bu$. Then Rosenblatt's transformation of $\bu$, denoted as $\boldsymbol{y}=(y_1,\ldots,y_d)$, is defined as
\[
	y_1 := u_1, \qquad
	y_2 := C(u_2|u_1), \quad
	\ldots \quad
	y_d := C(u_d|u_1,\ldots,u_{d-1}),
\label{def:Rosenblatt}
\]
where $C(u_k|u_1,\ldots,u_{k-1})$ is the conditional copula of $U_k$ given $U_1=u_1,\ldots,U_{k-1}=u_{k-1}, k=2,\ldots,d$.
\end{definition}

The data vector $\boldsymbol{y}=(y_1,\ldots,y_d)$ is now i.i.d.~with $y_i\sim U[0,1]$.
In the context of vine copulas the multivariate PIT is given for the special classes of C- and D-vine in \citet[Algorithm 5 and 6]{Aas_Czado}. It is a straight forward application of the Rosenblatt transformation of Definition \ref{def:Rosenblatt} to the recursive structure of a C- or D-vine copula. Similar, an algorithm for the R-vine can be stated, see Algorithm \ref{alg:PIT}. Here we make use of a similar structured algorithm of \citet{DissmannBrechmannCzadoKurowicka2011} for calculating the log-likelihood of an R-vine copula.
 

In order to perform computations for a general R-vine copula model, it is convenient to use matrix notation  \citep[see][]{MoralesNapolesCookeKurowicka2009,DissmannBrechmannCzadoKurowicka2011}. It stores the edges of an R-vine tree sequence in a lower triangular matrix. For the vine tree sequence of Figure \ref{fig:5dimRvine} this is given by

{\footnotesize\begin{equation}
 M=\begin{pmatrix}
 5 \\
 4 & 4 \\
 3 & 3 & 3 \\
 1 & 2 & 2 & 2 \\
 2 & 1 & 1 & 1 & 1\\
 \end{pmatrix}.
 \label{eq:matrixM}
 \end{equation}}
 
As an illustration for how the R-vine matrix is derived from the tree sequence in Figure \ref{fig:5dimRvine} and vice versa, let us consider the second column of the matrix. Here we have $4$ on the diagonal, and $3$ as a second entry. The set of remaining entries below $3$ is $\{1,2 \}$. This corresponds to the edge $3,4\vert 1,2$ in $T_3$ of Figure \ref{fig:5dimRvine}.
Similarly, the edge $2,4\vert 1$ corresponds to the first and third entry of the second column given the last entry of this row. So the second column of $M$ identifies the edges $3,4\vert 1,2, 2,4\vert 1$ and 1,4. Here we ordered the conditioned set in ascending order.
Further, the diagonal of $M$ is sorted in descending order which can always be achieved by reordering the node labels. The elements of $M$ are denoted by $m_{i,j}, i=1,\ldots,d, j=1,\ldots,i$. From now on, we will assume that all matrices are "normalized" in this way as this allows to simplify notation. 
Similar to the R-vine tree sequence identifying matrix $M$, we can store the corresponding copula families $\mathcal{B}$ and  parameters $\btheta$ in additional lower triangular matrices. 

%


Further, the conditional distributions $C_{j(e)\vert D(e)}$ and $C_{k(e) \vert D(e)}$ are required for the calculation of the log-likelihood function in an R-vine model. Evaluated at a $d$-dimensional vector of observations $(u_1,\ldots, u_d)$, $C_{j(e)\vert D(e)}$ and $C_{k(e) \vert D(e)}$ are the arguments of the copula density $c_{j(e),k(e);D(e)}$ corresponding to edge $e$. We will store these values in the matrices

{\begin{equation}
\label{Vdirect}
V^{direct}=\begin{pmatrix}
\ldots & & & &  \\
\ldots & C(u_4\vert u_{m_{3,2}}, u_{m_{4,2}},u_{m_{5,2}}) & &  &  \\
\ldots &  C(u_4 \vert u_{m_{4,2}}, u_{m_{5,2}}) & C(u_3 \vert u_{m_{4,3}},u_{m_{5,3}}) & &  \\
\ldots & C(u_4\vert u_{m_{5,2}}) & C(u_3\vert u_{m_{5,3}}) &  C(u_2\vert u_{m_{5,4}})   \\
\ldots  & u_4 & u_3 & u_2 & u_1 & \  \\
\end{pmatrix}
\end{equation}
and
\begin{equation}
\label{Vindirect}
V^{indirect}=\begin{pmatrix}
\ldots & & & & & \\
\ldots & C(u_{m_{3,2}}\vert u_{m_{4,2}},u_{m_{5,2}},u_4) & &  & &\ \\
\ldots &  C(u_{m_{4,2}}\vert u_{m_{5,2}}, u_4) & C(u_{m_{4,3}} \vert u_{m_{5,3}},u_3) & & &\ \\
\ldots & C(u_{m_{5,2}}\vert u_4) & C( u_{m_{5,3}}\vert u_3) &  C(u_{m_{5,4}}\vert u_2)  &\ \\
\ldots & u_{m_{5,2}} & u_{m_{5,3}} & u_{m_{5,4}} &  \ \\
\end{pmatrix},
\end{equation}}\noindent
respectively, both of dimension $d\times d$. For computational purposes an additional matrix $\tilde{M}$ is needed to decide whether the arguments of the pair-copula have to be picked from matrix $V^{direct}$ or $V^{indirect}$. For details we refer to 
\citet{DissmannBrechmannCzadoKurowicka2011}.

Algorithm \ref{alg:PIT} now calculates the PIT of an R-vine copula model. 
The vector $\boldsymbol{y} = (y_1,\ldots,y_d)$ stores at the end the transformed PIT variables.\\

\begin{center}
  \captionsetup{style=ruled,type=algorithm,skip=0pt}
  \makeatletter
    \fst@algorithm\@fs@pre
  \makeatother
  \caption{Probability integral transform (PIT) of an R-vine}
  \makeatletter
    \@fs@mid
  \makeatother
  

\begin{algorithmic}[1]
\label{alg:PIT}
	\REQUIRE $d$-dimensional R-vine specification in matrix form, i.e., $M$, $\mathcal{B}$, $\btheta$, where $m_{k,k}=d-k+1, k=1,\ldots,d$, and a set of observations $(u_1,\ldots, u_d)$.
	\STATE Let $V^{\text{direct}} = (v^\text{direct}_{k,i} | i = 1, \ldots, d; k=i,\ldots,d)$.
	\STATE Let $V^{\text{indirect}} = (v^\text{indirect}_{k,i} | i = 1, \ldots, d; k=i,\ldots,d)$.
	\STATE Set $(v^\text{direct}_{d,1}, v^\text{direct}_{d,2}, \ldots, v^\text{direct}_{d,d})
				= (u_d, u_{d-1}, \ldots u_1)$.
	\STATE Let $\tilde{M} = (\tilde{m}_{k,i} | i = 1, \ldots, d; k = i, \ldots, d)$ where $\tilde{m}_{k,i} = \max \{ m_{k,i}, \ldots, m_{d,i} \}$
	for all $i = 1, \ldots, d$ and $k = i, \ldots, d$.
	\STATE Set $y_1 = u_1$
	\FOR[Iteration over the columns of $M$]{$i = d-1, \ldots, 1$}
		\FOR[Iteration over the rows of $M$]{$k = d, \ldots, i+1$}
			\STATE Set $z_1 = v^\text{direct}_{k,i}$ 
			\IF{$\tilde{m}_{k,i} =  m_{k,i}$} 
				\STATE Set $z_2 = v^\text{direct}_{k,(d-\tilde{m}_{k,i}+1)}$.
			\ELSE
				\STATE Set $z_2 = v^\text{indirect}_{k,(d-\tilde{m}_{k,i}+1)}$.
			\ENDIF	
			\STATE Set $v^\text{direct}_{k-1,i} =  h(z_1,z_2 | \mathcal{B}^{k,i}, \theta^{k,i})$ and $v^\text{indirect}_{k-1,i} = h(z_2,z_1 | \mathcal{B}^{k,i},  \theta^{k,i})$.
			\STATE Set $y_{d-k+1} = v^{direct}_{i-1,k}$
		\ENDFOR
	\ENDFOR
	\RETURN $\boldsymbol{y} = (y_1,\ldots,y_d)$
\end{algorithmic}
\makeatletter
    \@fs@post
  \makeatother
\end{center}

\section{Cramér-von Mises, Kolmogorov-Smirnov and Anderson Darling goodness-of-fit test}
\label{appendix:tests}

\subsection{Multivariate and univariate Cram{\'e}r-von Mises and Kolmogorov-Smirnov test}
\label{subsec:CvM}

Already in the third century of 1900 two model specification tests were developed by Cram{\'e}r and von Mises, and by Kolmogorov and Smirnov. Both tests treat the hypothesis that $n$ i.i.d.~samples  $\boldsymbol{y}_1,\ldots,\boldsymbol{y}_n$ of the random vector $\boldsymbol{Y}=(Y_1,\ldots,Y_d)$ follow a specified continuous distribution function $F$, i.e.
\[
	H_0: \boldsymbol{Y}\sim F\quad \text{ versus }\quad H_1: \boldsymbol{Y}\not\sim F.
\]
The general \textbf{multivariate Cram{\'e}r-von Mises (mCvM)} test statistic for a d-dimensional random vector $\boldsymbol{Y}$ is defined as
\begin{equation}
	\text{\textbf{mCvM:}}\quad \omega^2 = \int_{\bbr^d} \left[\hat{F}_n(\boldsymbol{y})-F(\boldsymbol{y})\right]^2dF(\boldsymbol{y}),
	\label{eq:CvM}
\end{equation}
while the \textbf{multivariate Kolmogorov-Smirnov (mKS)} test statistic is 
\begin{equation}
	\text{\textbf{mKS:}}\quad D_n = \sup_{\boldsymbol{y}} |\hat{F}_n(\boldsymbol{y})-F(\boldsymbol{y})|.
	\label{eq:KS}
\end{equation} 
Here $	\hat{F}_n(\boldsymbol{y}) = \frac{1}{n+1}\sum_{j=1}^n \boldsymbol{1}_{\{\boldsymbol{y}_j\leq \boldsymbol{y}\}}$
denotes the empirical distribution function corresponds to the i.i.d.~sample $(\boldsymbol{y}_1,\ldots,\boldsymbol{y}_n)$ of $\boldsymbol{Y}$.\\
The \textbf{univariate cases} for the random variable $Y$ are then denoted by
\begin{equation*}
\begin{split}
	&\text{\textbf{CvM:}}\quad \omega^2 = \int_{-\infty}^{\infty} \left[\hat{F}_n(y)-F(y)\right]^2dF(y)\quad \text {and}\\
	&\text{\textbf{KS:}}\quad D_n = \sup_{y} |\hat{F}_n(y)-F(y)|.
\end{split}
\end{equation*}

\subsection{Univariate Anderson-Darling test}
\label{subsec:AndersonDarling}

\noindent
The \citet{AndersonDarling1954} test, is a statistical test of whether a given probability distribution fits a given set of data samples. It extends the Cram{\'e}r-von Mises test statistics by adding more weight in the tails of the distribution in consideration. Although it has a general multivariate definition we introduce only the univariate case, since only the univariate case is needed in Section \ref{subsec:PIT}. Let $Y$ be a random variable then the null hypothesis of the Anderson-Darling test is again $H_0: Y\sim F(y)$ against the alternative $H_1: Y\not\sim F(y)$.
The general \textbf{univariate Anderson-Darling (AD)} test statistic is defined as
\begin{equation}
	W^2_n = n\int_{-\infty}^{\infty} \left[\hat{F}_n(y)-F(y)\right]^2\psi(F(y))dF(y),
	\label{eq:AD}
\end{equation}
where $\psi(F(y))$ is a non-negative weighting function.
With the weighting function $\psi(u)=\frac{1}{u(1-u)}$ \citet{AndersonDarling1954} put more weight in the tails since this function is large near $u=0$ and $u=1$. Setting the weight function to $\psi(u)=1$ one gets as a special case the Cramér-von Mises test statistic. In the case of uniform margins (\ref{eq:AD}) simplifies to
\begin{equation}
	\text{\textbf{AD:}}\quad W^2_n = n \int_{0}^{1} \frac{\left[\hat{F}_n(y)-y\right]^2}{y(1-y)}dy,\qquad y\in[0,1].
	\label{eq:ADtest}
\end{equation}
%

\section{Model specification for the power study}
\label{appendix:powerstudy}
For the vine copula density (see Equation (\ref{eq:density})) often a short hand notation is used. For this the pair-copula arguments are omitted and denotes only the conditioned and conditioning set. Thus, for the R-vine given in Example \ref{example} we can write
\[
	c_{12345} = c_{1,2}\cdot c_{1,3}\cdot c_{1,4}\cdot c_{4,5}\cdot c_{2,4;1}\cdot c_{1,5;4}\cdot c_{2,3;1,4}\cdot c_{3,5;1,4}\cdot c_{2,5;1,3,4}.
\]
Similarly the considered C- and D-vine copula can be expressed as
\begin{align}
	c_{12345} &= 
	 c_{1,2}\cdot c_{2,3}\cdot c_{2,4}\cdot c_{2,5}\cdot c_{1,3|2}\cdot c_{1,4|2}\cdot c_{1,5|2}\cdot 
 c_{3,4|1,2}\cdot c_{4,5|1,2}\cdot c_{3,5|1,2,4}
 \label{eq:5dimCvine} \\
	c_{12345} &= 
	c_{1,2}\cdot c_{1,5}\cdot c_{4,5}\cdot c_{3,4}\cdot c_{2,5|1}\cdot c_{1,4|5}\cdot c_{3,5|4}\cdot 
 c_{2,4|1,5}\cdot c_{1,3|4,5}\cdot c_{2,3|1,4,5}
 \label{eq:5dimDvine}
\end{align}

\begin{table}[!ht]
	\centering
		\begin{tabular}{llcr}
		\toprule
			Tree & $\mathcal{V}_R^5$ & $\mathcal{B}_R^5(\mathcal{V}_R^5)$ & $\tau$  \\
			\midrule
			1 & $c_{1,2}$ & Gauss & 0.71 \\
				& $c_{1,3}$ & Gauss & 0.33 \\ 
				& $c_{1,4}$ & Clayton & 0.71 \\
				& $c_{4,5}$ & Gumbel & 0.74 \\
			2	& $c_{2,4|1}$ & Gumbel & 0.38 \\
				& $c_{3,4|1}$ & Gumbel & 0.47 \\
				& $c_{1,5|4}$ & Gumbel & 0.33 \\
			3	& $c_{2,3|1,4}$ & Clayton & 0.35 \\
				& $c_{3,5|1,4}$ & Clayton & 0.31 \\
			4 & $c_{2,5|1,3,4}$ & Gauss & 0.13 \\
			\bottomrule
		\end{tabular}
		\caption{Copula families and Kendall's $\tau$ values of the investigated (mixed) R-vine copula model defined by (\ref{eq:5dimRvine}) in the 5-dimensional case.}
		\label{tab:5dimRvine}
\end{table}


\begin{table}[!ht]
	\centering
		\begin{tabular}{llcrllcr}
		\toprule
			Tree & $\mathcal{V}_R^8$ & $\mathcal{B}_R^8(\mathcal{V}_R^8)$ & $\tau$ & Tree & $\mathcal{V}_R^8$ & $\mathcal{B}_R^8(\mathcal{V}_R^8)$ & $\tau$ \\
			\midrule
			1 & $c_{1,2}$ & Joe 	& 0.41 		& 3 & $c_{6,7|1,4}$ & Frank 0.03 	& 7 \\
				& $c_{1,4}$ & Gauss & 0.59		&	  & $c_{1,8|4,7}$ & Gumbel & 0.22 \\
				& $c_{1,5}$ & Gauss & 0.59 		&	  & $c_{3,4|1,6}$ & Gauss & 0.41 \\
				& $c_{1,6}$ & Frank & 0.23 		&   & $c_{2,3|1,6}$ & Gumbel & 0.68 \\
				& $c_{3,6}$ & Frank & 0.19 		&	4 & $c_{6,8|1,4,7}$ & Clayton & 0.17 \\
				& $c_{4,7}$ & Clayton & 0.44 	&	  & $c_{5,7|1,4,6}$ & Gauss & 0.09 \\
				& $c_{7,8}$ & Gumbel & 0.64 	&   & $c_{3,5|1,4,6}$ & Frank & 0.21 \\
			2 & $c_{2,6|1}$ & Clayton & 0.58&   & $c_{2,4|1,3,6}$ & Gumbel & 0.57 \\
				& $c_{1,3|6}$ & Gumbel & 0.44 & 5 & $c_{2,5|1,3,4,6}$ & Joe & 0.25 \\
				& $c_{4,6|1}$ & Frank & 0.11 	&   & $c_{3,7|1,4,5,6}$ & Gumbel & 0.17 \\
				& $c_{4,5|1}$ & Clayton & 0.53&   & $c_{5,8|1,4,6,7}$ & Frank & 0.02 \\
				& $c_{1,7|4}$ & Clayton & 0.29& 6 & $c_{2,7|1,3,4,5,6}$ & Gumbel & 0.31 \\
				& $c_{4,8|7}$ & Gauss & 0.53 	&   & $c_{3,8|1,4,5,6,7}$ & Clayton & 0.20 \\
			3 & $c_{5,6|1,4}$ & Gauss & 0.19& 7 & $c_{2,8|1,3,4,5,6,7}$ & Frank & 0.03 \\
		\bottomrule
		\end{tabular}
		\caption{Copula families and Kendall's $\tau$ values of the investigated (mixed) R-vine copula model in the 8-dimensional case.}
		\label{tab:8dimRvine}
\end{table}

\end{document}